\documentclass[conference]{IEEEtran}
\IEEEoverridecommandlockouts
\usepackage{cite}
\usepackage{amsmath,amsthm,amssymb,amsfonts}
\usepackage{algorithmic}
\usepackage{graphicx}
\usepackage{textcomp}
\usepackage{xcolor}
\usepackage[hyphens]{url}

\usepackage{caption}
\frenchspacing
\setlength{\pdfpagewidth}{8.5in}
\setlength{\pdfpageheight}{11in}

\usepackage{algorithm}
\usepackage{makecell}
\usepackage{subcaption}
\usepackage{booktabs}
\usepackage{multirow}
\usepackage{siunitx}
\usepackage{dblfloatfix}
\usepackage{flushend}

\newtheorem{proposition}{Proposition}
\def\BibTeX{{\rm B\kern-.05em{\sc i\kern-.025em b}\kern-.08em
    T\kern-.1667em\lower.7ex\hbox{E}\kern-.125emX}}
\begin{document}

\title{\LARGE{OASIS: Offsetting Active Reconstruction Attacks in Federated Learning
}\\
}



\author{\IEEEauthorblockN{Tre' R. Jeter\IEEEauthorrefmark{2}\IEEEauthorrefmark{1}, Truc Nguyen\IEEEauthorrefmark{3}\IEEEauthorrefmark{1}, Raed Alharbi\IEEEauthorrefmark{4}, and My T. Thai\IEEEauthorrefmark{2}}

\IEEEauthorblockA{
\IEEEauthorrefmark{2}University of Florida, Gainesville, FL 32611, USA\\
\IEEEauthorrefmark{3}National Renewable Energy Laboratory, Golden, CO 80401, USA\\
\IEEEauthorrefmark{4}Saudi Electronic University, Riyadh, Saudi Arabia\\
Email: t.jeter@ufl.edu, Truc.Nguyen@nrel.gov, ri.alharbi@seu.edu.sa, mythai@cise.ufl.edu}

\thanks{\IEEEauthorrefmark{1}
These authors contributed equally to this work.
}
}


\maketitle

\begin{abstract}
    Federated Learning (FL) has garnered significant attention for its potential to protect user privacy while enhancing model training efficiency. For that reason, FL has found its use in various domains, from healthcare to industrial engineering, especially where data cannot be easily exchanged due to sensitive information or privacy laws. However, recent research has demonstrated that FL protocols can be easily compromised by active reconstruction attacks executed by dishonest servers.  These attacks involve the malicious modification of global model parameters, allowing the server to obtain a verbatim copy of users' private data by inverting their gradient updates. Tackling this class of attack remains a crucial challenge due to the strong threat model. In this paper, we propose a defense mechanism, namely OASIS, based on image augmentation that effectively counteracts active reconstruction attacks while preserving model performance. We first uncover the core principle of gradient inversion that enables these attacks and theoretically identify the main conditions by which the defense can be robust regardless of the attack strategies. We then construct our defense with image augmentation showing that it can undermine the attack principle. Comprehensive evaluations demonstrate the efficacy of the defense mechanism highlighting its feasibility as a solution.
\end{abstract}

\begin{IEEEkeywords}
Federated Learning, Privacy, Deep Neural Networks, Reconstruction Attack, Dishonest Servers
\end{IEEEkeywords}

\section{Introduction}
In recent years, Federated Learning (FL) has developed into a well-respected distributed learning framework that promotes user privacy 
with high model performance. By design, FL authorizes collaborative training of a global model between millions of users without revealing any of their locally trained, private data. It is an iterative protocol where, in each round, a central server distributes the most up-to-date global model to an arbitrary subset of users that train locally and communicate their model updates back to the server. These model updates include the gradients that are calculated based on the global model and the local training data. The central server then averages these model updates to form a new global model to distribute in the next round. 

\begin{figure*}[htp!]
    \centering
    \includegraphics[width=0.9\linewidth]{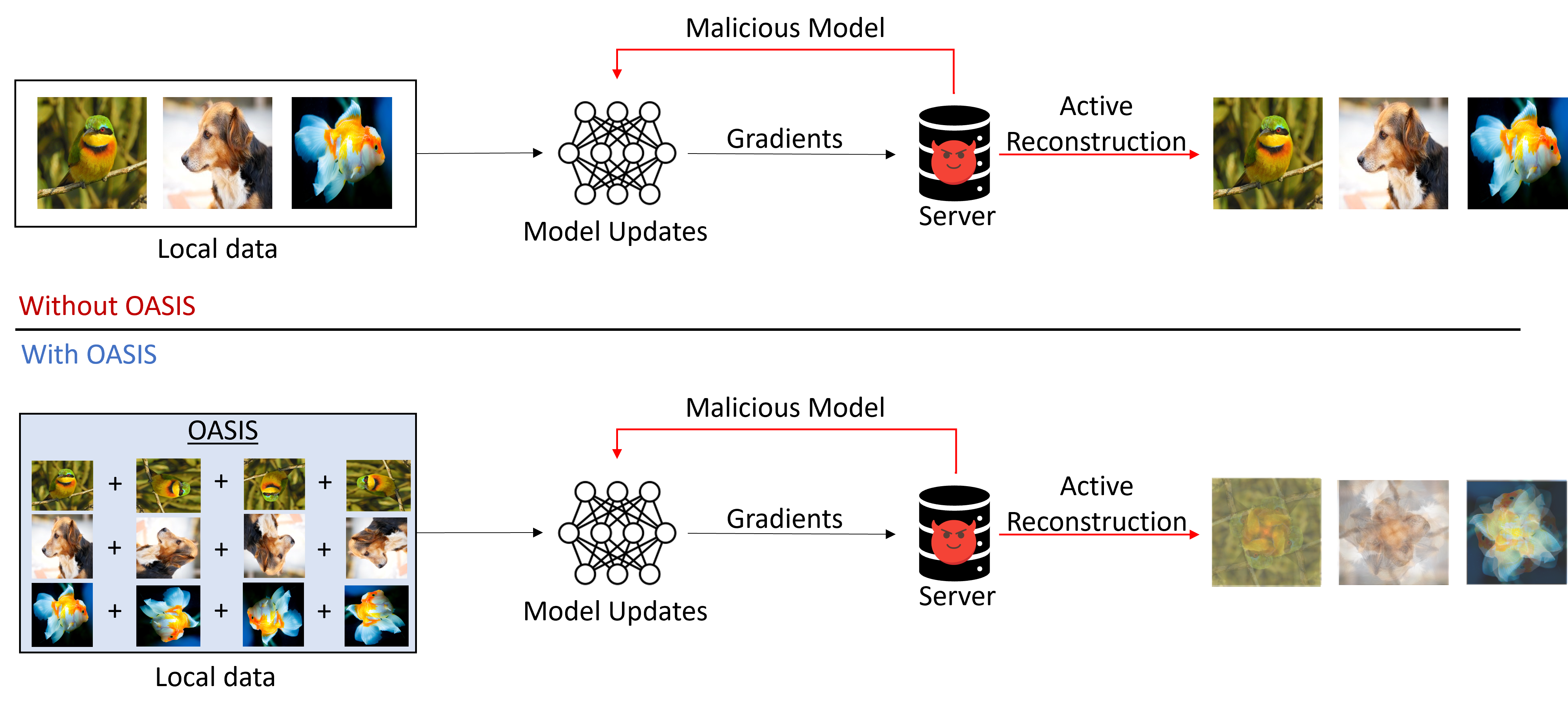}
    \caption{Overview design of OASIS. 
    \textit{Top:} Standard \textit{active} reconstruction attack with malicious model modifications 
    perfectly reconstructing 
    training samples. \textit{Bottom:} OASIS in place with augmented data to defend the \textit{active} reconstruction attacks. The resulting reconstruction is a linear combination of images, effectively hiding the content of training samples. \emph{Note: Rotation is not the only transformation within OASIS.}
    }
    \label{fig:system-design}
\end{figure*}

With a disruptive privacy-centric design, FL has been regarded as an auspicious solution for applying machine learning to the healthcare sector, particularly in scenarios where sharing medical data between different sites is intractable due to strict privacy protection policies such as the Health Insurance Portability and Accountability Act (HIPAA) \cite{moore2019review} and General Data Protection Regulation (GDPR) \cite{voigt2017eu}. Numerous studies have proposed FL for medical image analysis, utilizing data such as X-rays, MRIs, and PET scans from different hospital sites while complying with privacy laws \cite{kaissis2020secure,adnan2022federated,li2020multi,silva2019federated,jiang2022harmofl,stripelis2021scaling}. This innovative approach is not limited to healthcare; FL is also making significant strides in industrial engineering. For instance, in urban environment image sensing, research has shown that FL makes it easier to perform a time-series analysis of industrial environment factors obtained from multiple sensors and unmanned aerial vehicles (UAVs) across different companies while maintaining confidential data privacy \cite{hu2018federated,liu2020federated,tam2021adaptive,gao2020federated}. Beyond these applications, FL is also stimulating advancements in diverse domains such as control systems, autonomous vehicles, and smart manufacturing \cite{
huong2021detecting,zeng2022federated,kevin2021federated}, showcasing the versatility and broad impact of a privacy-preserving learning framework.

However, the promise of privacy for clients in FL has been constantly challenged \cite{nguyen2023preserving}. Recent work \cite{boenisch2023curious,fowl2021robbing,nguyen2022blockchain,pasquini2022eluding,tre2023privacy,nguyen2023active,vu2024analysis,vu2023active} has investigated a strong and practical threat model in which the server can be \textit{actively dishonest}, such that it is capable of maliciously modifying the global model before dispatching it to the users. This threat model has instigated several \textit{active} reconstruction attacks in which an FL server can perfectly reconstruct some data points in a users' training data \cite{boenisch2023curious,fowl2021robbing,vu2023active}. These attacks exploit a fundamental concept that the gradients in the local model updates sent by users may contain complete and memorized individual data points. These gradients can later be inverted by the server to reveal such data points. 
As an actively dishonest adversary, the server can strategically manipulate the weights of the global model to maximize the number of individual data points that can be reconstructed from the users' gradients. For that reason, although the training data is said to never leave a users' device, it can still be reconstructed, 
thereby refuting the claim of privacy-preservation in FL.

Given such a strong adversary, defending against this class of \textit{active} reconstruction attacks is challenging. 
Until now, a mitigation approach for FL-based attacks focused on obfuscating the gradients via a Differential Privacy (DP) mechanism, such as DPSGD \cite{abadi2016deep}, that formally bounds the privacy leakage by adding calibrated noise to the gradients. 
However, previous work \cite{fowl2021robbing,boenisch2023curious} 
has shown that 
to prevent an attacker from discerning the content of reconstructed data, the user must add a significant amount of DP noise to the gradients that unfortunately 
degrades the overall model performance. 

In this paper, we propose a new defense, OASIS, to \textbf{\underline{O}}ffset this class of \textbf{\underline{A}}ctive recon\textbf{\underline{S}}truct\textbf{\underline{I}}on attack\textbf{\underline{S}}. As there are different strategies to manipulate the global model for conducting this attack, it is imperative to figure out how to tackle the attack in principle so that the defense is robust regardless of the manipulation strategies. We first analyze the attack surface and determine the core vulnerability in the gradient updates that enables the memorization of individual training samples. By doing so, we generalize the existing attacks by discovering the conditions under which a dishonest server can conduct \textit{gradient inversion} to reconstruct users' data. We then intuitively show how to undermine those conditions and mitigate the impact of the attacks.

From the attack principle, we show that the users can preprocess their training data in a way that prevents the samples from being revealed via gradient inversion, effectively countering this class of \textit{active} reconstruction attacks. A mechanism for such preprocessing 
is 
image augmentation 
\cite{ho2019population,yarats2020image}. This includes adding 
augmented versions of an image, such as rotated, flipped, and sheared counterparts to the training data before computing the gradients. By doing this, OASIS aims to have the gradients memorize a \textit{linear combination} of the original image and its augmented versions, instead of memorizing any individual images. As a result, inverting these gradients would reconstruct 
what 
appears to be an overlap of multiple images, 
thereby effectively preventing the server from discerning the content of the reconstructed images, as shown in Figure \ref{fig:system-design}. Since image augmentation is used to improve model generalization \cite{cubuk2019autoaugment}, we 
safely maintain the performance of FL with this countermeasure. Our analysis shows that OASIS opens a new approach to protect users' data from gradient inversion without suffering the utility loss as in DP.

\noindent \textbf{Contributions.} Our key contributions are as follows:
\begin{itemize}
    \item We analyze the attack surface and determine the key principle behind gradient inversion that enables 
    \textit{active} reconstruction attacks. From that, we theoretically show how to 
    tackle this class of attack, regardless of how the attacker manipulates the global model parameters.
    \item Based on the attack principle, we present OASIS as a suite of image augmentations. To our knowledge, this mechanism stands as the first general and scalable defense against \textit{active} reconstruction attacks via gradient inversion by \textit{actively dishonest servers} in FL.
    \item We thoroughly analyze the effectiveness of OASIS through experiments with respect to attack success rate and augmentation type. 
    We also show how OASIS maintains model performance.
\end{itemize}

\noindent \textbf{Organization.} The paper's structure is as follows: Section \ref{sec:prelim} provides a primer on FL, image augmentation, and the main augmentations used. Section \ref{sec:defense} presents the threat model, attack principle, and our OASIS defense. Section \ref{sec:exp} presents an in-depth experimental analysis and results supporting our defense. Section \ref{sec:related} discusses related research on reconstruction attacks and existing defenses. 
Section \ref{sec:con} concludes the paper, summarizing our key findings.

\section{Preliminaries}\label{sec:prelim}
In this section, we summarize the
FL process while also describing the benefits of image augmentation during training.

\subsection{\textbf{Federated Learning}}
Depending on how training data is distributed among the participants, there are two main versions of FL: horizontal and vertical. In this paper, we focus on a horizontal setting in which different data owners hold the same set of features but different sets of samples. 
We denote $f_w : \mathbb{R}^d \rightarrow \mathbb{R}^k$ as a $k$-class neural network model that is parameterized by a set of weights $w$. The goal of $f_w$ is to map a data point $x \in \mathbb{R}^d$ to a vector of posterior probabilities $f_w(x_i) = \mathcal{Y}$ over $k$ classes.

FL is an iterative learning framework for training a global model $f_w$ on decentralized data owned by $N$ different users $\{u_j\}_{j=1}^N$.  A central server coordinates the training of $f_w$ by iteratively aggregating gradients
computed locally by the users. Let $t \in [0, T]$ be the current iteration of the FL protocol, and $w^t$ be the set of parameters at iteration $t$. At iteration $t=0$, the global $w^t$ is initialized at random by a central server. At every iteration $t$, a subset of $M<N$ users is randomly selected to contribute to the training. Each of the selected users $u_j$ obtains $f_{w^t}$ from the central server and calculates the gradients $G_j^t$ for $f_{w^t}$ using their local training batch $\mathcal{D}_j$. In specific, $G_j^t = \nabla_{w^t} \mathcal{L}(\mathcal{D}_j, w^t)$ where $\mathcal{L}$ is a loss function. Then, each $u_j$ uploads its gradients to the central server. With a learning rate $\eta$, the server 
averages 
these gradients to update the global model’s parameters as follows:

\begin{equation}
    G^{t} = \frac{1}{M}\sum_{j=1}^M G_j^{t}, \quad w^{t+1} = w^t - \eta G^t
\end{equation}
The training continues until $f_{w^t}$ converges.

\subsection{\textbf{Image Augmentation}}
Image augmentation is a very useful technique in deep learning 
that allows for the expansion of a training dataset with artificially generated data. Given a dataset of images, augmenting each image 
using rotation, shearing, or flipping 
yields a new expanded dataset for training.
This added preprocessing helps increase model generalization and avoid overfitting 
by altering the makeup of data and adding it to the training set \cite{ho2019population,yarats2020image}. 
With more data to train, the model is less prone to \textit{memorize} the data, but generalize the pattern between the data. 
In turn, increased model generalization \textit{tends to} lead to higher model performance. Image augmentation has been widely used in 
datasets like ImageNet \cite{krizhevsky2012imagenet} and CIFAR-10 \cite{cubuk2019autoaugment}. 

Our work focuses on three main transformations: rotation, shearing, and flipping. In each augmentation scenario, we consider a 2D image $I$ where $I(i,j)$ denotes the pixel value at coordinates $(i,j)$. Rotation includes tilting an image's pixels by an angle $\theta$. We define major rotation angles as the maximum degrees of each respective quadrant in an x-y coordinate system (i.e., $90^\circ$, $180^\circ$, and $270^\circ$). Minor rotation angles are described as any angle $<90^\circ$. 
More formally, 
an image $I'$ can be constructed from $I$ as follows:
\begin{equation}\label{equ:rotate}
    I'(i, j) = I(i\cos(\theta) - j\sin(\theta), i\sin(\theta) + j\cos(\theta)) \quad \forall i,j
\end{equation}
where $\theta$ is the angle in which an image is rotated.

Flipping includes reflecting an image on its x-axis (vertical flip) or its y-axis (horizontal flip). A horizontally flipped image $I'$ can be constructed from $I$ as follows:
\begin{equation}\label{equ:hflip}
    I'(i,j) = I(-i,j) \quad \forall i,j
\end{equation}

\noindent Similarly, a vertically flipped image $I'$ can be constructed from $I$ as follows:
\begin{equation}\label{equ:vflip}
    I'(i,j) = I(i,-j) \quad \forall i,j
\end{equation}

Shearing is projecting a point or set of points within an image in a different direction. A sheared image $I'$ can be constructed from $I$ as follows:
\begin{equation}\label{equ:shear}
    I'(i,j) = I(i + \mu j, j) \quad \forall i,j
\end{equation}
where $\mu$ is the shear factor controlling the \textit{shearing intensity}.

\section{OASIS -- A Proposed Defense}\label{sec:defense}
This section describes our proposed defense, OASIS, against \textit{active} reconstruction attacks via gradient inversion in FL. In order to devise an effective defense, we analyze the attack surface to determine the core vulnerability of the system and how the attacks exploit it in principle. We then propose OASIS to prevent such exploitation, effectively tackling this class of 
attacks, regardless of how they are implemented.

\subsection{\textbf{Generalizing Active Reconstruction Attacks via Gradient Inversion}}\label{sec:threat}
\noindent \textbf{Threat Model.} We examine a 
server that is dishonest and 
aims to reconstruct the private data of a targeted user. As discussed in previous work \cite{boenisch2023curious,fowl2021robbing}, a dishonest server is capable of making malicious modifications to $w$ before dispatching it to the users at any iterations. These modifications can include changing and/or adding model parameters. However, the modification should be minimal to avoid detection.

For this attack, the adversary places a malicious fully-connected layer consisting of $n$ attacked neurons in the neural network model $f_w$, 
 so that inverting the gradients of these neurons would recover the users' data. Generally, the attack becomes less effective when the layer is placed deeper in the neural network. For the purpose of devising a robust defense, we consider a strong adversary who can place the malicious layer directly right after the input layer. The layer is parameterized by a weight matrix $W \in \mathbb{R}^{n\times d}$ and a bias vector $b \in \mathbb{R}^{n}$. Denoting $\mathcal{D} = \{x_j \in R^{d}\}_{j=1}^{B}$ as the local training data of a targeted user where $B$ is the batch size, the goal of the attack is to reconstruct the data points in $\mathcal{D}$ via the malicious layer. Our defense, OASIS, aims to minimize the quality of reconstruction, regardless of how the malicious layer $(W,b)$ is determined by the adversary.

\noindent \textbf{Attack Vector Analysis.} We aim to generalize 
state-of-the-art \textit{active} reconstruction attacks by deducing their core principle.  Suppose that the malicious layer is updated based on one single-input $x_t \in \mathbb{R}^d$, for each neuron $i$, the gradients of the loss with respect to the weights, and biases will be $$\left(\frac{\partial \mathcal{L}_t}{\partial W_i},\frac{\partial \mathcal{L}_t}{\partial b_i}\right)$$  where $\mathcal{L}_t$ is shorthand for $\mathcal{L}(x_t, (W, b))$. All the gradients $$\left\{\left(\frac{\partial \mathcal{L}_t}{\partial W_i},\frac{\partial \mathcal{L}_t}{\partial b_i}\right)\right\}_{i=1}^n$$ are then uploaded to the server. As shown in \cite{geiping2020inverting,fowl2021robbing,boenisch2023curious}, with a ReLU activation function, the server can perfectly reconstruct $x_t$ by dividing the gradients
as follows:
\begin{equation}\label{eq:recover1}
    \left(\frac{\partial \mathcal{L}_t}{\partial b_i}\right)^{-1} \frac{\partial \mathcal{L}_t}{\partial W_i} = x_t
\end{equation}
where $i$ is the index of a neuron that is \textit{activated} by the input $x_t$ and $\frac{\partial \mathcal{L}_t}{\partial b_i}\neq 0$. In other words, knowing the gradients $\left(\frac{\partial \mathcal{L}_t}{\partial W_i},\frac{\partial \mathcal{L}_t}{\partial b_i}\right)$ of a particular input sample $x_t$ allows perfect reconstruction of that sample via gradient inversion.

However, in practical FL, when the malicious layer is updated based on a batched input $\mathcal{D} = \{x_j\}_{j=1}^{B}$ where $B > 1$, all derivatives are summed over the batch dimension. In particular, the gradients of the malicious layer that the server receives will instead be: $$\left\{\left(\sum_{j = 1}^B \frac{\partial \mathcal{L}_j}{\partial W_i}, \sum_{j = 1}^B\frac{\partial \mathcal{L}_j}{\partial b_i}\right)\right\}_{i=1}^n$$ 
When the server performs the same inversion computation as Equation (\ref{eq:recover1}) on this summed gradient, it will reconstruct
\begin{equation*}
    \left(\sum_{j = 1}^B \frac{\partial \mathcal{L}_j}{\partial b_i}\right)^{-1} \left(\sum_{j = 1}^B \frac{\partial \mathcal{L}_j}{\partial W_i}\right)
\end{equation*}
which is 
proportional to a linear combination of the samples that activated neuron $i$. The coefficient for each sample in the linear combination depends on how much the sample contributes to the 
loss $\mathcal{L}$. Reconstructing such a combination may not be able to reveal the content of each individual input sample, thereby hindering the impact of the attack.

To circumvent the problem of summed gradients, the CAH attack proposed by \cite{boenisch2023curious} chooses the parameters for $(W,b)$ that maximize the likelihood that each attacked neuron is activated by only one sample in the batch.  The rationale behind this is that if 
$i$ is activated only by one data point $x_t$, 
then $$\left(\sum_{j = 1}^B \frac{\partial \mathcal{L}_j}{\partial W_i}, \sum_{j = 1}^B\frac{\partial \mathcal{L}_j}{\partial b_i}\right) = \left(\frac{\partial \mathcal{L}_t}{\partial W_i},\frac{\partial \mathcal{L}_t}{\partial b_i}\right)$$ since $\frac{\partial \mathcal{L}_j}{\partial W_i} = 0$ and $\frac{\partial \mathcal{L}_j}{\partial b_i} = 0$ for data points $x_j (j\neq t)$ that do not activate the neuron $i$. After obtaining $\left(\frac{\partial \mathcal{L}_t}{\partial W_i},\frac{\partial \mathcal{L}_t}{\partial b_i}\right)$, the server can 
reconstruct $x_t$ by Equation (\ref{eq:recover1}). 

On the other hand, \cite{fowl2021robbing} proposes the RTF attack in which the reconstruction can be carried out by considering the difference between two successive neurons' gradients, with respect to some specific parameters $(W,b)$. Specifically, the server can strategically choose $(W,b)$ so that, given the gradients $$\left(\sum_{j = 1}^B \frac{\partial \mathcal{L}_j}{\partial W_i}, \sum_{j = 1}^B\frac{\partial \mathcal{L}_j}{\partial b_i}\right)$$ of neuron $i$ and $$\left(\sum_{j = 1}^B \frac{\partial \mathcal{L}_j}{\partial W_{i+1}}, \sum_{j = 1}^B\frac{\partial \mathcal{L}_j}{\partial b_{i+1}}\right)$$ of neuron $i+1$, the difference between them can reveal the gradients $\left(\frac{\partial \mathcal{L}_t}{\partial W_i}, \frac{\partial \mathcal{L}_t}{\partial b_i}\right)$ of a particular sample $x_t$ that activates neuron $i$. With this, Equation (\ref{eq:recover1}) 
can perfectly reconstruct that sample $x_t$.

From this analysis, we can observe the \textit{underlying principle of these attacks}: as long as the gradients $\left(\frac{\partial \mathcal{L}_t}{\partial W_i}, \frac{\partial \mathcal{L}_t}{\partial b_i}\right)$ of one individual sample $x_t$ can be extracted from the summed gradients $$\left\{\left(\sum_{j = 1}^B \frac{\partial \mathcal{L}_j}{\partial W_i}, \sum_{j = 1}^B\frac{\partial \mathcal{L}_j}{\partial b_i}\right)\right\}_{i=1}^n$$ with $\frac{\partial \mathcal{L}_t}{\partial b_i} \neq 0$, that sample $x_t$ can be perfectly reconstructed by gradient inversion via Equation (\ref{eq:recover1}). Therefore, the attack strategies specifically involve choosing $(W,b)$ that optimizes the chance of 
extraction, thus improving reconstruction quality.

\noindent \textbf{Defense Intuition.} By this principle, to effectively defend against such attacks, it is essential to prevent the leaking of any individual data points' gradients from the summed gradients, regardless of how the parameters $(W,b)$ are chosen. With this in mind, we establish the following proposition:

\begin{proposition}\label{prop1}
Given a sample $x_t\in \mathcal{D}$, if there exists an $x'_t\in \mathcal{D}$ such that $x_t$ and $x'_t$ activate the same set of neurons in the malicious layer, then the adversary cannot extract $$\left(\frac{\partial \mathcal{L}_t}{\partial W_i}, \frac{\partial \mathcal{L}_t}{\partial b_i}\right)$$ with $\frac{\partial \mathcal{L}_t}{\partial b_i} \neq 0$ from $$\left\{\left(\sum_{j = 1}^B \frac{\partial \mathcal{L}_j}{\partial W_i}, \sum_{j = 1}^B\frac{\partial \mathcal{L}_j}{\partial b_i}\right)\right\}_{i=1}^n$$ 
\end{proposition}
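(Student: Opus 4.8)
The plan is to reduce ``extraction'' to a rank-one recovery problem and then use the hypothesis $S_t=S_{t'}$ to force the relevant object to have rank at least two, which is not identifiable. First I would make the per-sample gradient structure explicit. For the malicious ReLU layer, writing $z_{i,j}=W_i\cdot x_j+b_i$ and $g_{i,j}:=\partial\mathcal{L}_j/\partial z_{i,j}$, the chain rule gives $\partial\mathcal{L}_j/\partial W_i=g_{i,j}x_j$ and $\partial\mathcal{L}_j/\partial b_i=g_{i,j}$, where $g_{i,j}=0$ whenever $x_j$ does not activate neuron $i$. Hence the summed gradients the server sees are $A_i=\sum_j g_{i,j}x_j$ and $c_i=\sum_j g_{i,j}$, and the activation set of a sample is $S_j=\{i:g_{i,j}\neq 0\}$. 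Equation~(\ref{eq:recover1}) is precisely the statement that if a single rank-one term $g_{i,t}(x_t,1)$ can be isolated inside $(A_i,c_i)$ then $x_t$ is recovered; the general attack isolates such a term through a linear read-out $\sum_i\alpha_i(A_i,c_i)=(\sum_j\gamma_j x_j,\sum_j\gamma_j)$ with $\gamma_j=\sum_i\alpha_i g_{i,j}$, which succeeds exactly when $\gamma_j=0$ for all $j\neq t$ and $\gamma_t\neq 0$.

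Next I would localize the argument to the two indistinguishable samples. Since $\partial\mathcal{L}_t/\partial b_i\neq 0$ means $g_{i,t}\neq 0$, i.e. $i\in S_t$, and $S_t=S_{t'}=:S$, every neuron that carries $x_t$ also carries $x_{t'}$: there is no ``pure'' neuron measuring $x_t$ alone. Collecting the terms of the summed gradients that involve this pair, the server's information about $(x_t,x_{t'})$ enters only through $u_t x_t^{\top}+u_{t'}x_{t'}^{\top}$ (weights) and $u_t+u_{t'}$ (biases), where $u_t=(g_{i,t})_{i\in S}$ and $u_{t'}=(g_{i,t'})_{i\in S}$; the weight contribution is a matrix of rank at most two, and all contributions of the remaining samples are unaffected by $x_t,x_{t'}$ since each depends only on its own forward pass.

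The core step is to show this object does not determine $x_t$. I would exhibit an explicit alternative producing identical summed gradients: keep every other sample fixed, pick any $\lambda\neq 1$, and set $\tilde x_t=\lambda x_t+(1-\lambda)x_{t'}$, $\tilde x_{t'}=x_{t'}$, together with $\tilde g_{i,t}=g_{i,t}/\lambda$ and $\tilde g_{i,t'}=g_{i,t'}-\tfrac{1-\lambda}{\lambda}g_{i,t}$ on $S$. A direct check gives $\tilde g_{i,t}\tilde x_t+\tilde g_{i,t'}\tilde x_{t'}=g_{i,t}x_t+g_{i,t'}x_{t'}$ and $\tilde g_{i,t}+\tilde g_{i,t'}=g_{i,t}+g_{i,t'}$ for every $i\in S$, so the two configurations yield identical $(A_i,c_i)$ for all $i$. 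Because $x_t\neq x_{t'}$ forces $\tilde x_t\neq x_t$, the true sample cannot be singled out from the summed gradients. Equivalently, in the linear read-out picture, any coefficients $\{\alpha_i\}$ that kill $x_{t'}$ (force $\gamma_{t'}=0$) are constrained only on the neurons $S_{t'}=S$, and the same constraint therefore kills $x_t$, giving $\gamma_t=0$ and contradicting $\gamma_t\neq 0$. Thus applying Equation~(\ref{eq:recover1}) to any recoverable combination returns a genuine mixture of $x_t$ and $x_{t'}$, never $x_t$ itself.

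I expect the main obstacle to be arguing that this covers \emph{every} attack strategy, not merely the single-neuron (CAH) and difference-of-neurons (RTF) instantiations: the claim must hold for all choices of $(W,b)$ and all read-out coefficients $\{\alpha_i\}$. The indistinguishability construction handles this uniformly, since it fixes the summed gradients themselves, which are all the server ever observes. Two minor points I would pin down are the nondegeneracy assumption $x_t\neq x_{t'}$ (if the two samples coincided, their common value would trivially be revealed) and the linear independence of $x_t,x_{t'}$ so that the component matching in the construction is well posed; the dependent case only introduces an additional scaling ambiguity and is resolved the same way.
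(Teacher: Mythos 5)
Your route is genuinely different from the paper's. The paper proves the proposition by a case analysis over two concrete extraction mechanisms --- (1) some attacked neuron is activated by $x_t$ alone, or (2) the adversary recovers the partial sums over some $D\subseteq\mathcal{D}\setminus x_t$ and over $D\cup x_t$ and differences them --- and shows each case forces the existence of a neuron separating $x_t$ from $x'_t$, contradicting the equal-activation-set hypothesis. You instead model extraction as an arbitrary linear read-out $\sum_i\alpha_i(A_i,c_i)$ and add an information-theoretic indistinguishability construction. The read-out argument has a genuine gap: the step ``any coefficients that kill $x_{t'}$ also kill $x_t$'' does not follow from $S_t=S_{t'}$. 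Equality of activation sets only says the vectors $u_t=(g_{i,t})_{i\in S}$ and $u_{t'}=(g_{i,t'})_{i\in S}$ have the same support; the single linear constraint $\langle\alpha,u_{t'}\rangle=0$ forces $\langle\alpha,u_t\rangle=0$ only if $u_t$ and $u_{t'}$ are proportional. If they are not (e.g.\ $n=2$, $u_t=(1,2)$, $u_{t'}=(1,1)$, $\alpha=(-1,1)$, both samples activating both neurons), the read-out isolates $\gamma_t(x_t,1)$ and Equation (\ref{eq:recover1}) recovers $x_t$ exactly. So under your more general model the proposition needs either an additional proportionality property of the upstream backprop signal (which holds for RTF-style constructions but is not assumed here) or the restricted, subset-sum notion of extraction that the paper's two cases implicitly encode.

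The indistinguishability construction is algebraically consistent but not realizable as stated: you perturb $\tilde g_{i,t'}=g_{i,t'}-\tfrac{1-\lambda}{\lambda}g_{i,t}$ while keeping $\tilde x_{t'}=x_{t'}$ fixed, yet $g_{i,t'}$ is a deterministic function of $x_{t'}$, its label, and the fixed adversary-chosen network, so no actual batch produces the alternative per-sample gradients; you would also need $\tilde x_t$ to activate exactly the set $S$. Without realizability, the adversary is not prevented from ruling out your alternative, and the information-theoretic conclusion does not follow. To repair the argument, either restrict to the class of read-outs with $0/1$ coefficients over subsets of neurons (which reduces to the paper's case analysis), or strengthen the hypothesis from ``same activation set'' to ``proportional gradient coefficient vectors on the shared neurons,'' under which your read-out argument goes through verbatim.
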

\begin{proof}
    There are two cases in which the adversary is able to obtain $\left(\frac{\partial \mathcal{L}_t}{\partial W_i}, \frac{\partial \mathcal{L}_t}{\partial b_i}\right)$ with $\frac{\partial \mathcal{L}_t}{\partial b_i} \neq 0$ from $$\left\{\left(\sum_{j = 1}^B \frac{\partial \mathcal{L}_j}{\partial W_i}, \sum_{j = 1}^B\frac{\partial \mathcal{L}_j}{\partial b_i}\right)\right\}_{i=1}^n$$

\noindent \textbf{(1)} There exists an $i \in \{1,2,...,n\}$ s.t. $$\left(\sum_{j = 1}^B \frac{\partial \mathcal{L}_j}{\partial W_i}, \sum_{j = 1}^B\frac{\partial \mathcal{L}_j}{\partial b_i}\right) = \left(\frac{\partial \mathcal{L}_t}{\partial W_i}, \frac{\partial \mathcal{L}_t}{\partial b_i}\right)$$ This means that the neuron $i$ is activated only by $x_t$, thus contradicting the fact that $x_t$ and $x'_t$ activate the same set of neurons.\\
\textbf{(2)} There exists a subset $D\subseteq \mathcal{D} \setminus x_t$ such that the adversary can determine $$\left(\sum_{x_j \in D} \frac{\partial \mathcal{L}_j}{\partial W_i}, \sum_{x_j\in D}\frac{\partial \mathcal{L}_j}{\partial b_i}\right)$$ and $$\left(\sum_{x_j \in D \cup x_t} \frac{\partial \mathcal{L}_j}{\partial W_i}, \sum_{x_j\in D \cup x_t}\frac{\partial \mathcal{L}_j}{\partial b_i}\right)$$ from $$\left\{\left(\sum_{j = 1}^B \frac{\partial \mathcal{L}_j}{\partial W_i}, \sum_{j = 1}^B\frac{\partial \mathcal{L}_j}{\partial b_i}\right)\right\}_{i=1}^n$$ 
To be able to obtain $$\left(\sum_{x_j \in D \cup x_t} \frac{\partial \mathcal{L}_j}{\partial W_i}, \sum_{x_j\in D \cup x_t}\frac{\partial \mathcal{L}_j}{\partial b_i}\right)$$ it must be that $x_t$ activates neuron $i$. This also means that $x'_t$ activates neuron $i$ (since $x_t$ and $x'_t$ activate the same set of neurons)  and that $x'_t \in D$. But in order to get $$\left(\sum_{x_j \in D} \frac{\partial \mathcal{L}_j}{\partial W_i}, \sum_{x_j\in D}\frac{\partial \mathcal{L}_j}{\partial b_i}\right)$$ there must be a neuron that is activated by samples in $D$, which includes $x'_t$, and is not activated by $x_t$. This 
contradicts the fact that $x_t$ and $x'_t$ activate the same set of neurons.
\end{proof}
Intuitively, suppose that for every $x_t \in \mathcal{D}$, we find a data point $x'_t$ such that $x_t$ and $x'_t$ always activate the same set of neurons, and then we add $x'_t$ to $\mathcal{D}$. From Proposition \ref{prop1}, it can be inferred that the best that the attacker can do is extracting $$\left(\frac{\partial \mathcal{L}_t}{\partial W_i} + \frac{\partial \mathcal{L}_t'}{\partial W_i},\frac{\partial \mathcal{L}_t}{\partial b_i} + \frac{\partial \mathcal{L}_t'}{\partial b_i}\right)$$ from the summed gradients $$\left\{\left(\sum_{j = 1}^B \frac{\partial \mathcal{L}_j}{\partial W_i}, \sum_{j = 1}^B\frac{\partial \mathcal{L}_j}{\partial b_i}\right)\right\}_{i=1}^n$$ 
Hence, it could only reconstruct a linear combination of $x_t$ and $x'_t$. If the linear combination does not reveal the content of $x_t$, then the proposed defense is successful.

\subsection{\textbf{Image Augmentation as a Defense}}\label{sec:oasis}
From the previous attack principle and defense intuition, we devise a robust defense mechanism as follows. For every $x_t\in \mathcal{D}$, we find a set of data points $X'_t$ such that $x_t$ and every $x'\in X'_t$ activate the same set of neurons. Then, we construct a new local training dataset:
\begin{equation}\label{equ:oasis}
    \mathcal{D}' = \mathcal{D} \cup \bigcup_{t = 1}^B X'_t
\end{equation}
If $\mathcal{D}$ is labeled then the data points in $X'_t$ are given the same label as $x_t$. The user will use $\mathcal{D}'$ instead of $\mathcal{D}$ for the FL process, so that an \textit{active} reconstruction attack can only reconstruct a linear combination of $x_t$ and $x'\in X'_t$. This mechanism is illustrated in Figure \ref{fig:system-design}. The defense is considered effective if it satisfies two conditions: (1) using $\mathcal{D}'$ does not heavily reduce the training performance, 
and (2) a linear combination of $x_t$ and $x'\in X'_t$ 
does not reveal the content of $x_t$.

To find $X'_t$ that activates the same set of neurons as $x_t$, we propose using image augmentation \cite{yarats2020image} where $X'_t$ contains the transformations of $x_t$, such as rotation, shearing, or flipping. 
As noted in \cite{cubuk2019autoaugment}, image augmentation can be 
used to teach a model about invariances in the data domain. For that reason, training with image augmentation makes the model invariant to the transformations of images. In other words, the model should exhibit similar behavior (i.e., similar patterns of neuron activations) given different transformations of an image. As a result, $x_t$ and images in $X'_t$ are \textit{more likely} to activate the same set of neurons. Our experiments in Section 
4, especially Figures \ref{fig:og_major_dataset}-\ref{fig:combo}, further support this claim by showing that the reconstructed image is a linear combination of the transformed and the original, which is caused by $x_t$ and $X'_t$ activating the same set of
neurons.

Furthermore, using image augmentation as a defense also satisfies the above-mentioned two conditions. First, using image augmentation 
maintains the training performance as it was originally designed to improve model generalization and reduce overfitting. Second, as we shall demonstrate in Section 
4, a linear combination of an image $x_t$ and its transformations yields an unrecognizable image, thereby protecting the original content of $x_t$.


\section{Experimental Analysis}\label{sec:exp}


This section evaluates the performance of our defense with various experiments to shed light on how OASIS can \textit{offset} state-of-the-art \textit{active} reconstruction attacks while still maintaining the model training performance.

\begin{figure}
    \small
    \centering
    \begin{tabular}{lcccccc}
        \includegraphics[width=2cm]{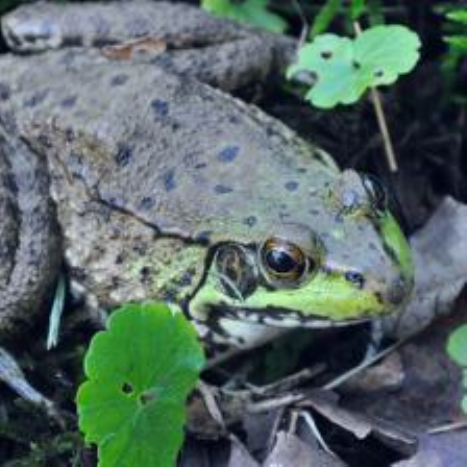}&
        \includegraphics[width=2cm]{examples/redo-no-defense.pdf}&
        \includegraphics[width=2cm,trim={16cm 0 8cm 0},clip=True]{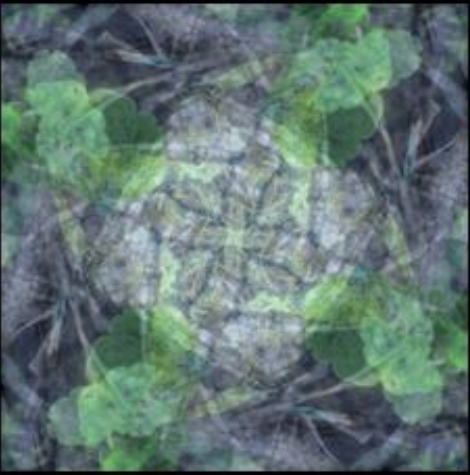}&
        \\
        & 139.17 dB & 15.41 dB
        \\
        \makecell{Original\\Image} & \makecell{Reconstruction \\ w/o OASIS} & \makecell{Reconstruction \\ with OASIS}
    \end{tabular}
    \caption{Example visual representation of PSNR values. Images with lower PSNR \textit{tend to} have worse reconstruction quality compared to images with higher PSNR.}
    \label{fig:ig-imagenet-examples}
\end{figure}

\begin{figure*}[htp!]
    \centering
    \begin{subfigure}{0.49\linewidth}
    \includegraphics[width=94mm]{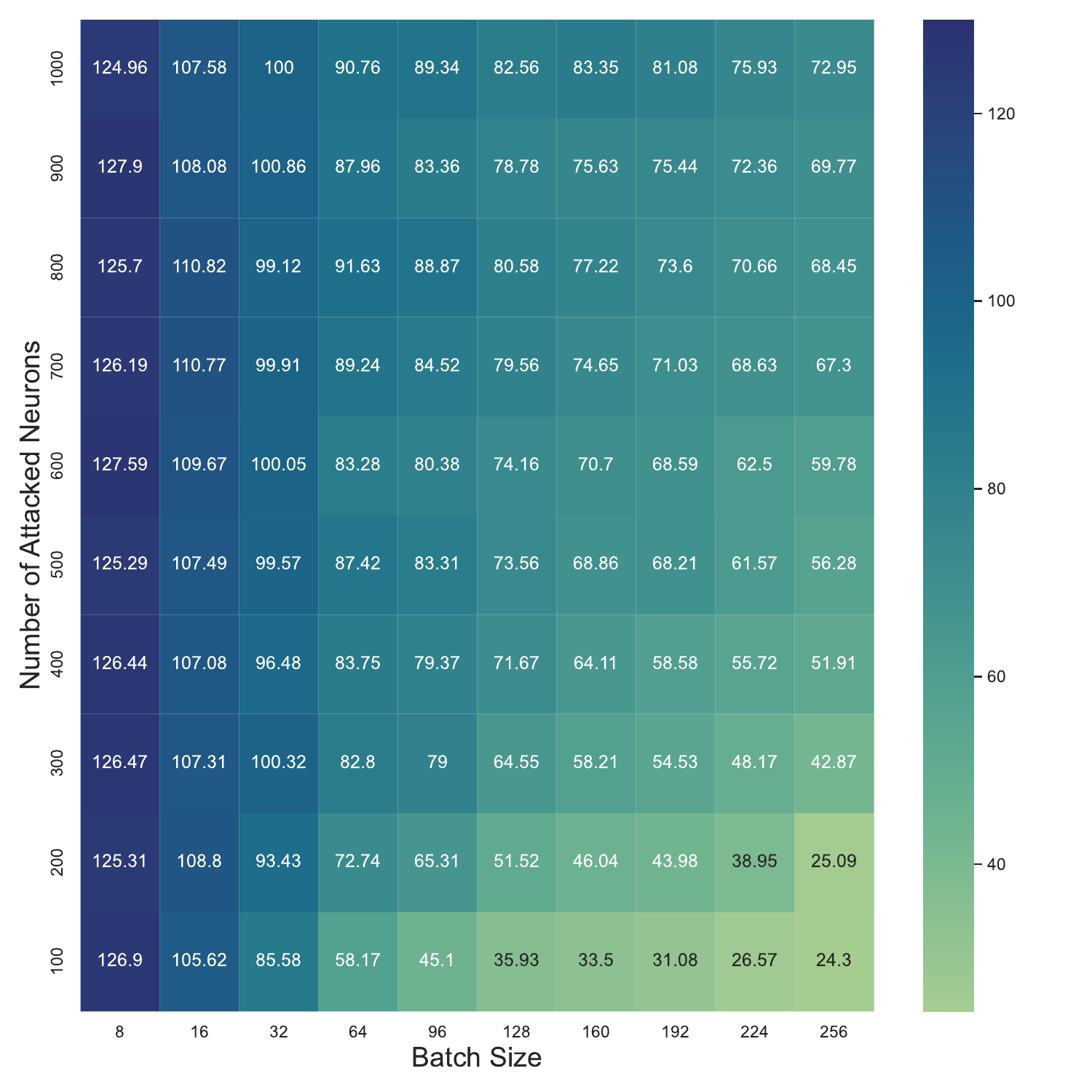}
    \caption{ImageNet}
    \end{subfigure}
    \begin{subfigure}{0.49\linewidth}
    \includegraphics[width=94mm]{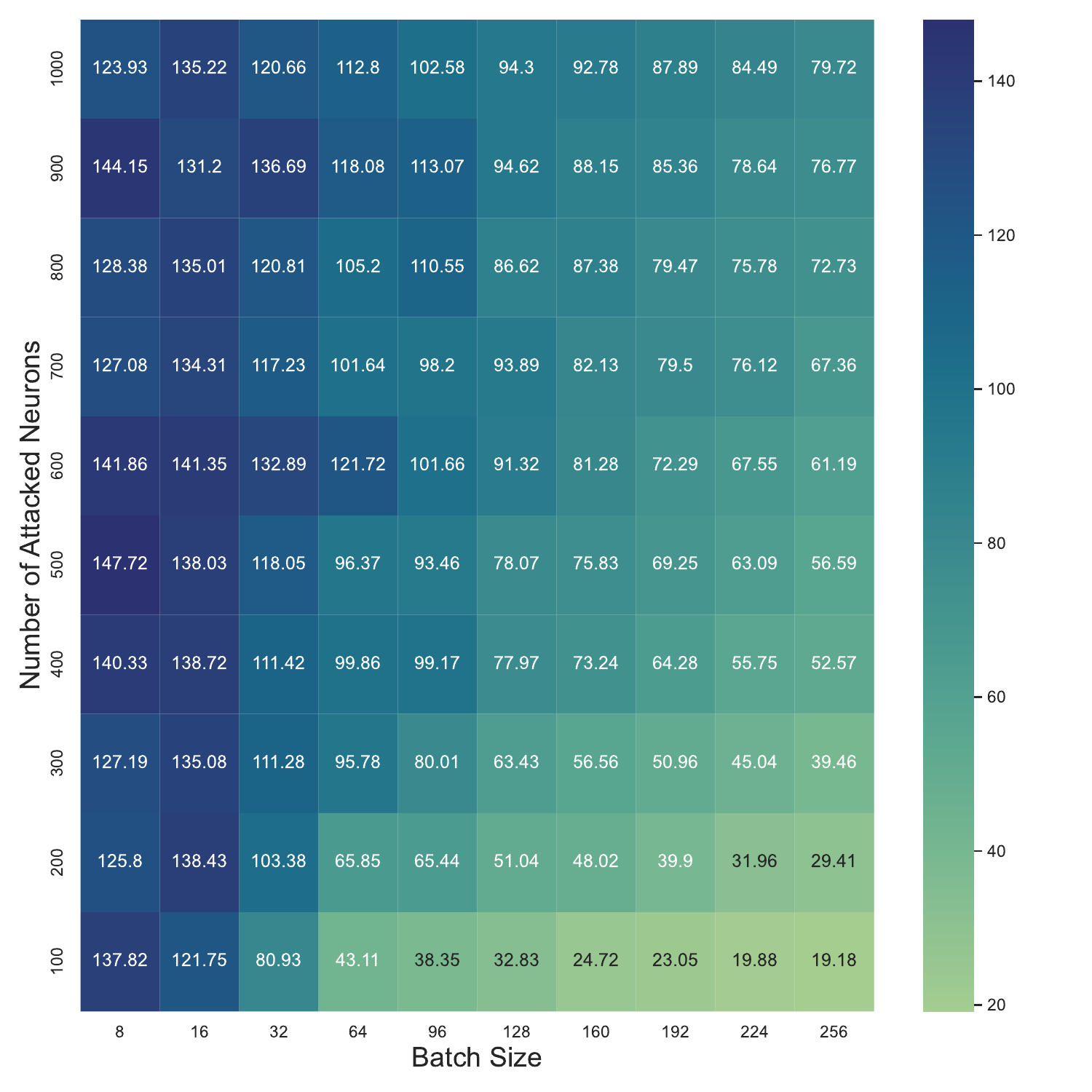}
    \caption{CIFAR100}
    \end{subfigure}
    \caption{Average PSNR over the images reconstructed by the RTF attack w.r.t the batch size and the number of attacked neurons on ImageNet and CIFAR100.}
    \label{fig:rtf-config}
\end{figure*}

\begin{figure*}[htp!]
    \centering
    \begin{subfigure}{0.49\linewidth}
    \includegraphics[width=94mm]{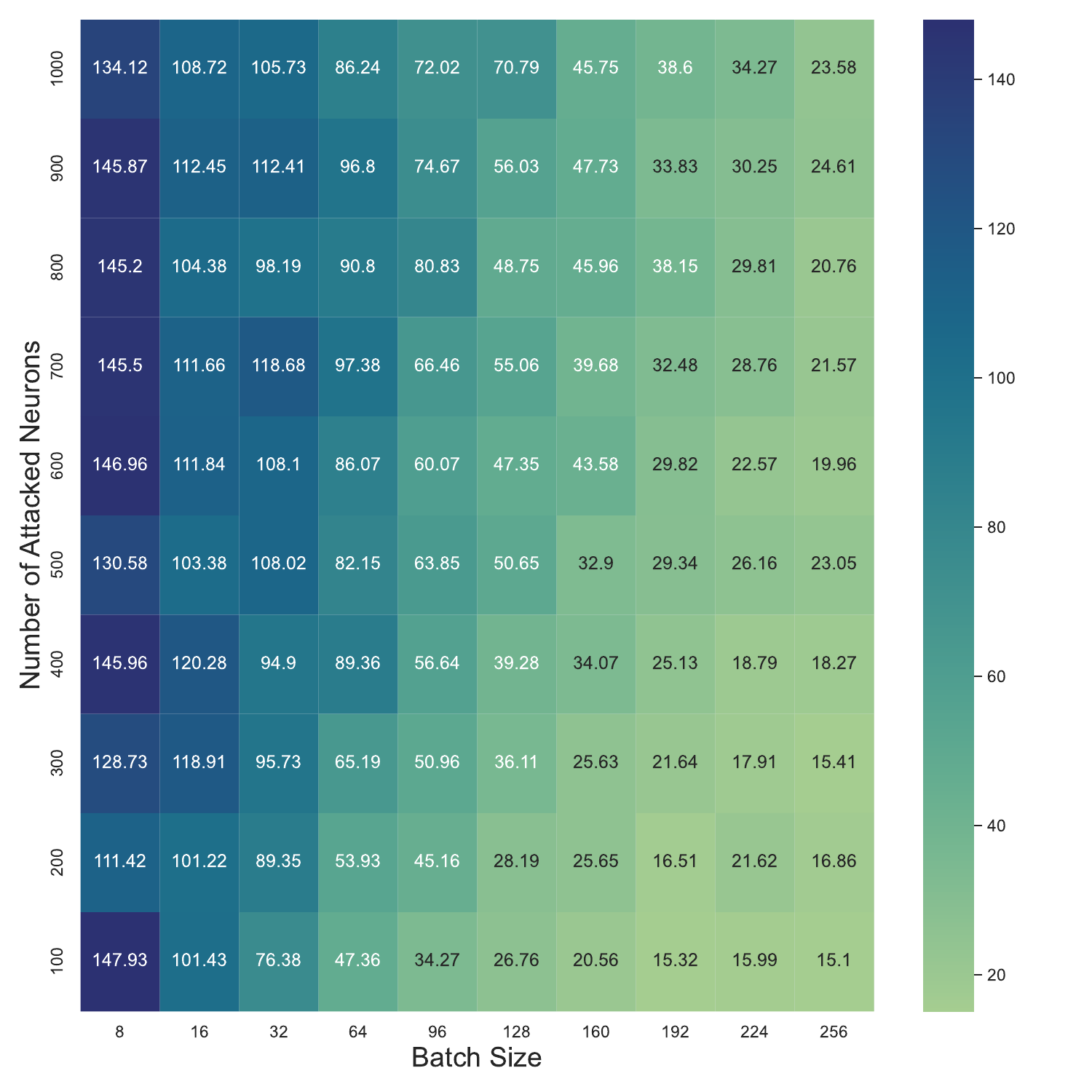}
    \caption{ImageNet}
    \end{subfigure}%
    \begin{subfigure}{0.49\linewidth}
    \includegraphics[width=94mm]{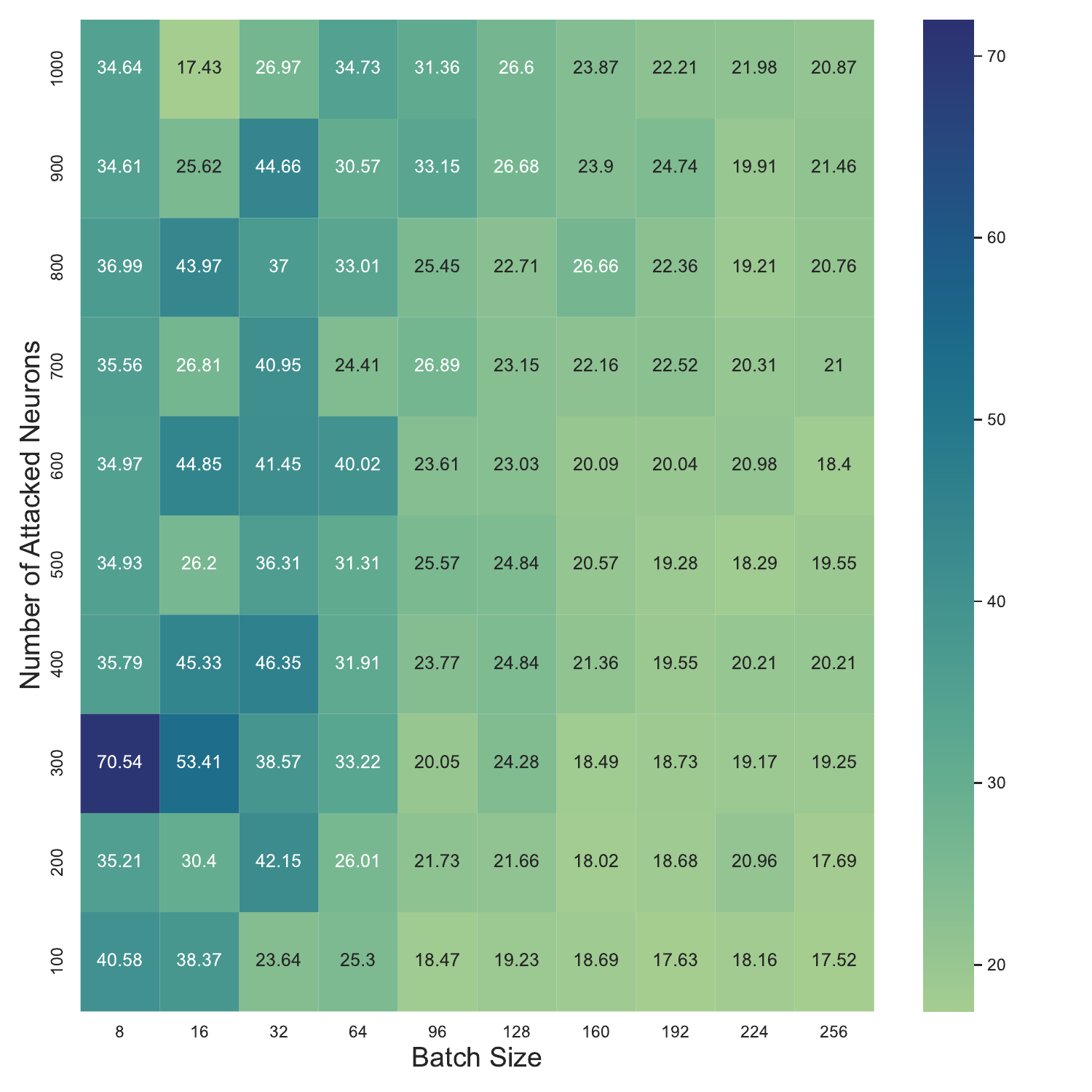}
    \caption{CIFAR100}
    \end{subfigure}
    \caption{Average PSNR over the images reconstructed by the CAH attack w.r.t the batch size and the number of attacked neurons on ImageNet and CIFAR100.}
    \label{fig:cah-config}
\end{figure*}

\subsection{\textbf{Experimental Settings}} \label{sec:settings}
We conduct two state-of-the-art \textit{active} reconstruction attacks, namely \textit{Robbing the Fed (RTF)} \cite{fowl2021robbing} and \textit{Curious Abandon Honesty (CAH)} \cite{boenisch2023curious}, against our OASIS defense on two datasets  ImageNet \cite{deng2009imagenet} and CIFAR100 \cite{krizhevsky2009learning}. 
For these attacks, we adopt the implementation from \url{https://github.com/JonasGeiping/breaching}. 
To capture how OASIS mitigates the success rate of the attacks, similar to previous work \cite{geiping2020inverting,fowl2021robbing}, we use the \textit{Peak Signal-to-Noise Ratio (PSNR)} value to measure the quality of a reconstructed image with respect to the original image.  Higher PSNR values indicate better reconstruction quality, thus higher attack success rates. Figure \ref{fig:ig-imagenet-examples} illustrates a visual representation of PSNR values. Our goal is to minimize the PSNR values of reconstructed images. Furthermore, we visually compare the reconstructed images when using OASIS against their respective original images to demonstrate how OASIS protects the content of the dataset. Finally, we measure model performance for each augmentation method on each dataset. OASIS is expected to impose a negligible trade-off on the performance of training models.

\begin{figure*}[htp!]
    \centering
    \begin{subfigure}{0.49\linewidth}
        \centering
        \includegraphics[width=0.49\textwidth]{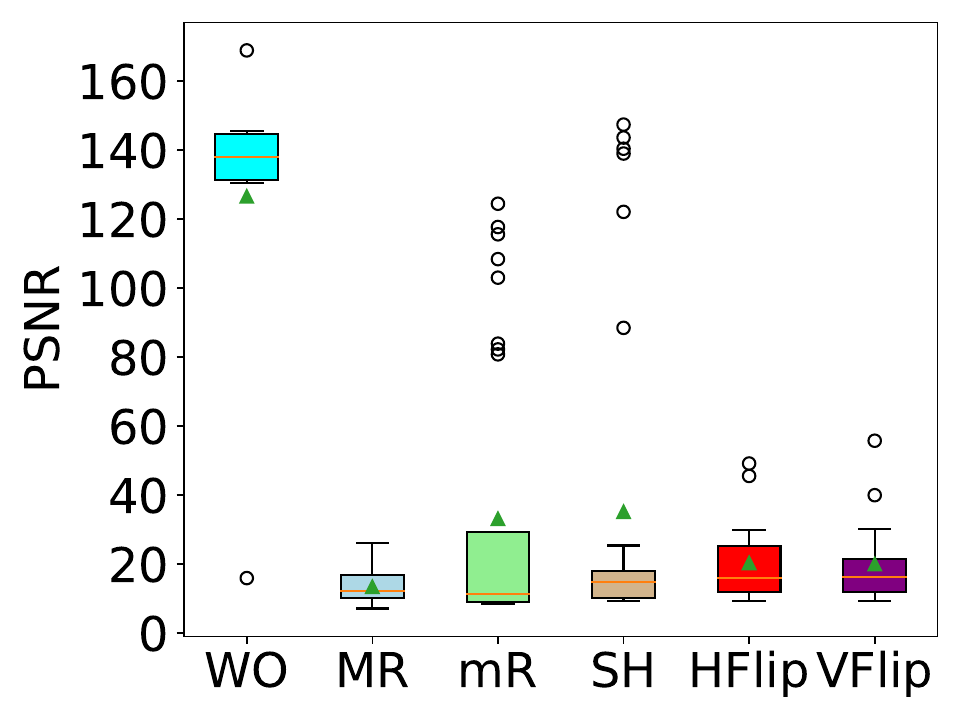}
        \includegraphics[width=0.49\textwidth]{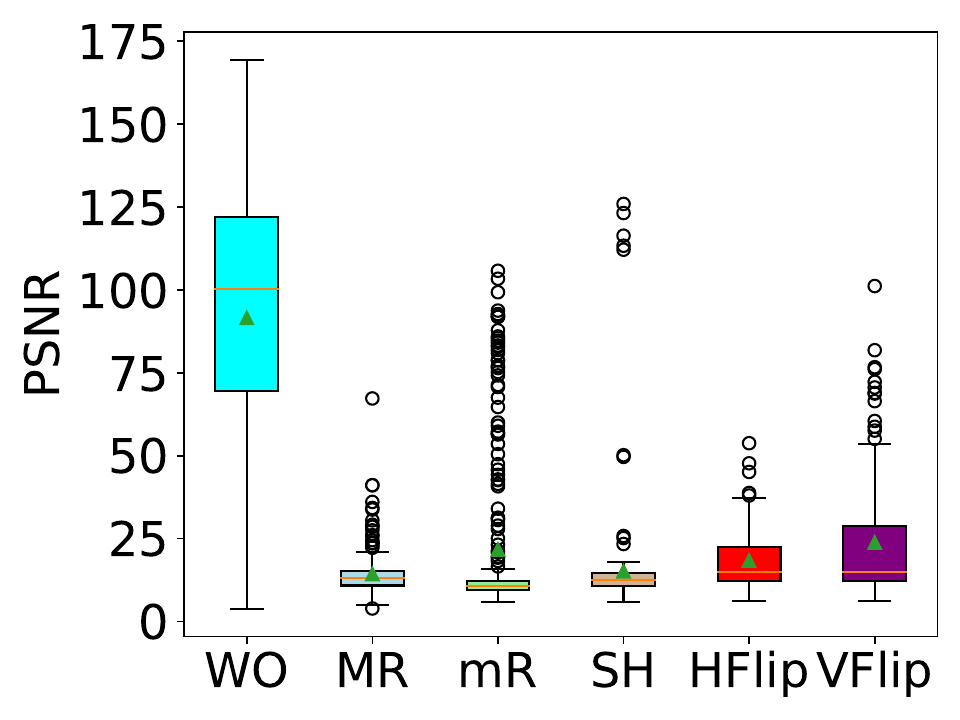}
        \caption{ImageNet. \textit{Left:} $(B,n) = (8,900)$. \textit{Right:} $(B,n) = (64,800)$}
    \end{subfigure}
    \hfill
    \begin{subfigure}{0.49\linewidth}
        \centering
        \includegraphics[width=0.49\textwidth]{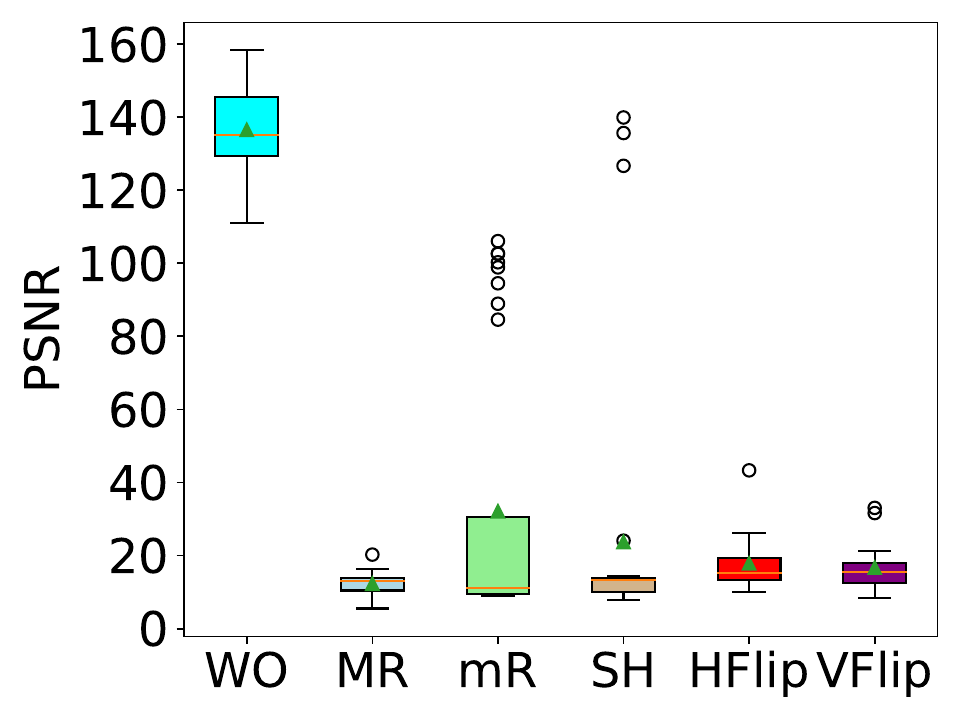}
        \includegraphics[width=0.49\textwidth]{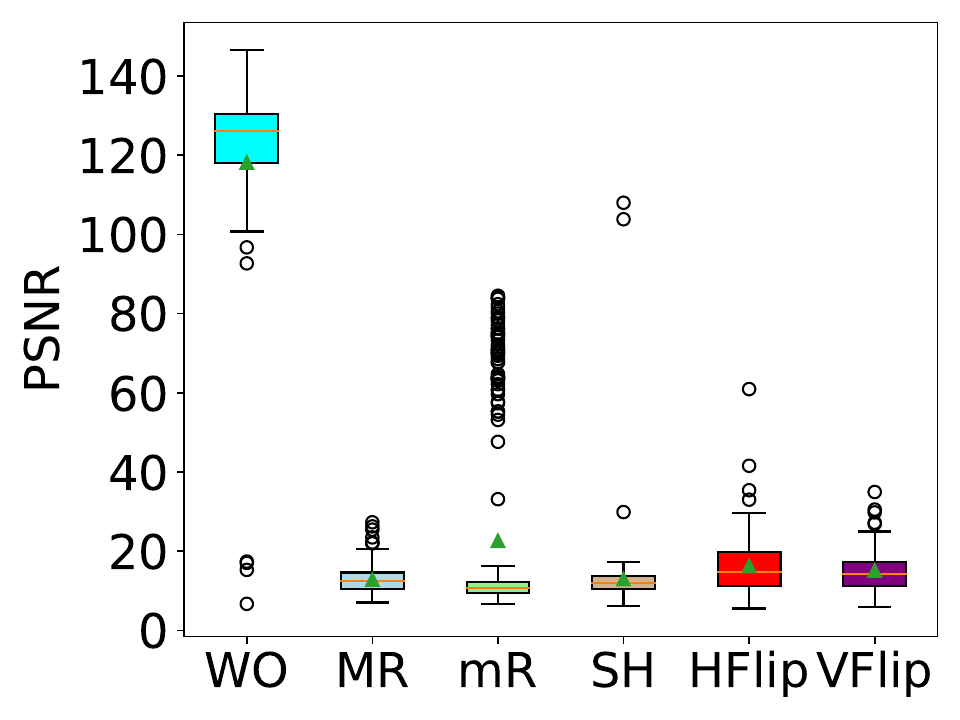}
        \caption{CIFAR100. \textit{Left:} $(B,n) = (8,500)$. \textit{Right:} $(B,n) = (64,600)$}
    \end{subfigure}
    \caption{PSNR values of images reconstructed by the RTF attack w.r.t different transformations and different batch sizes on ImageNet and CIFAR100. The green triangle denotes the average PSNR over all reconstructed images. \emph{(WO = Without OASIS, MR = Major Rotation, mR = Minor Rotation, SH = Shearing, HFlip = Horizontal Flip, and VFlip = Vertical Flip)}}
    \label{fig:rtf-psnr}
\end{figure*}

\begin{figure*}[htp!]
    \centering
    \begin{subfigure}{0.49\linewidth}
        \centering
        \includegraphics[width=0.49\textwidth]{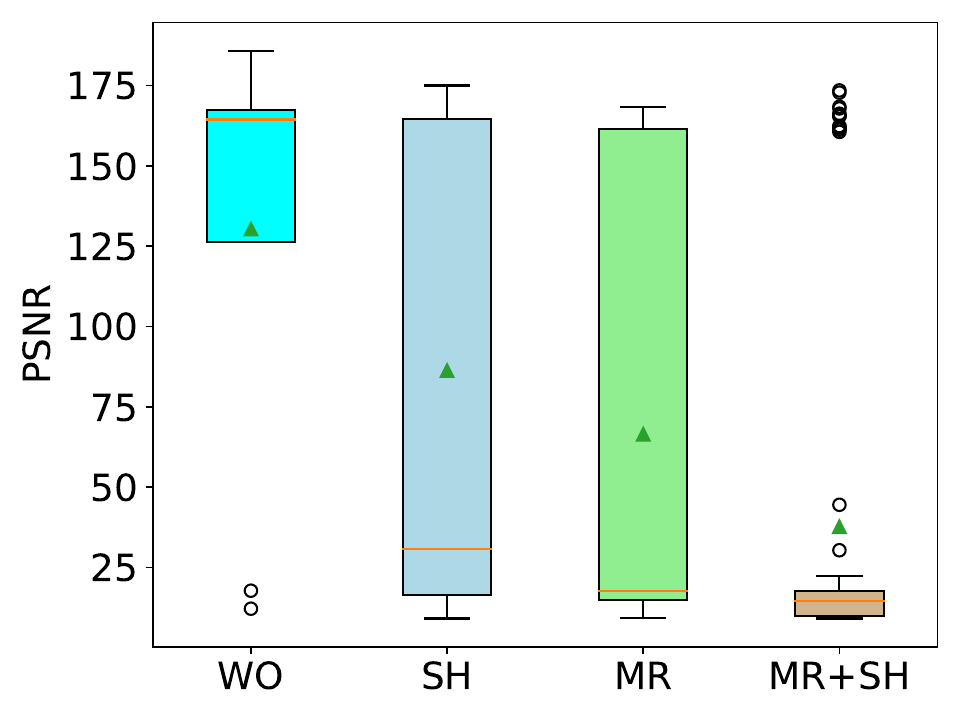}
        \includegraphics[width=0.49\textwidth]{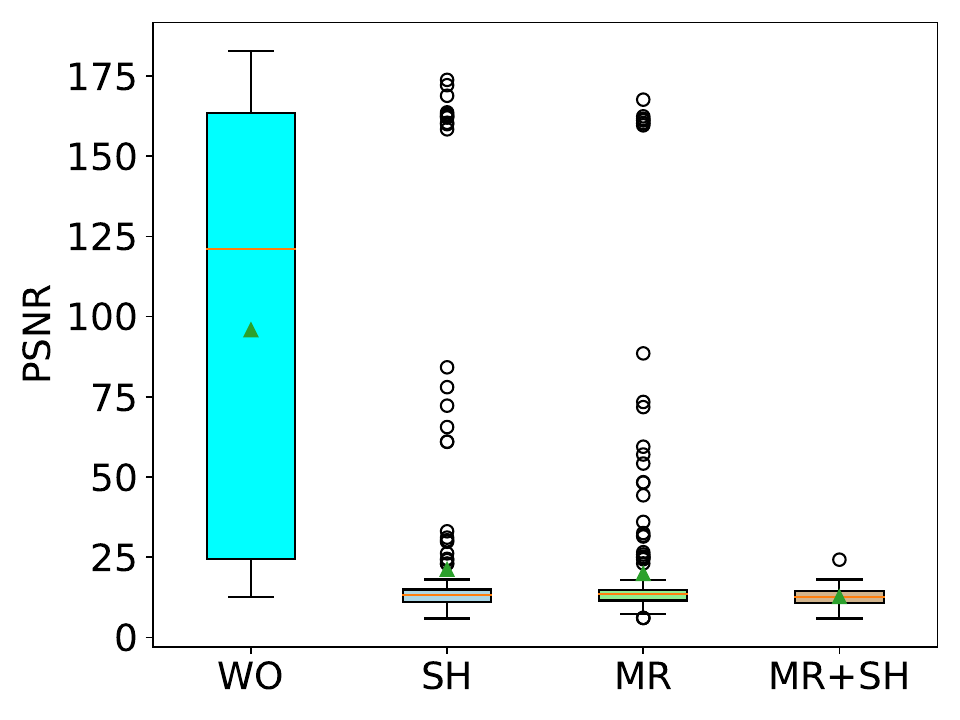}
        \caption{ImageNet. \textit{Left:} $(B,n) = (8,100)$. \textit{Right:} $(B,n) = (64,700)$}
        \label{fig:cah-psnr-imgnet}
    \end{subfigure}
    \hfill
    \begin{subfigure}{0.49\linewidth}
        \centering
        \includegraphics[width=0.49\textwidth]{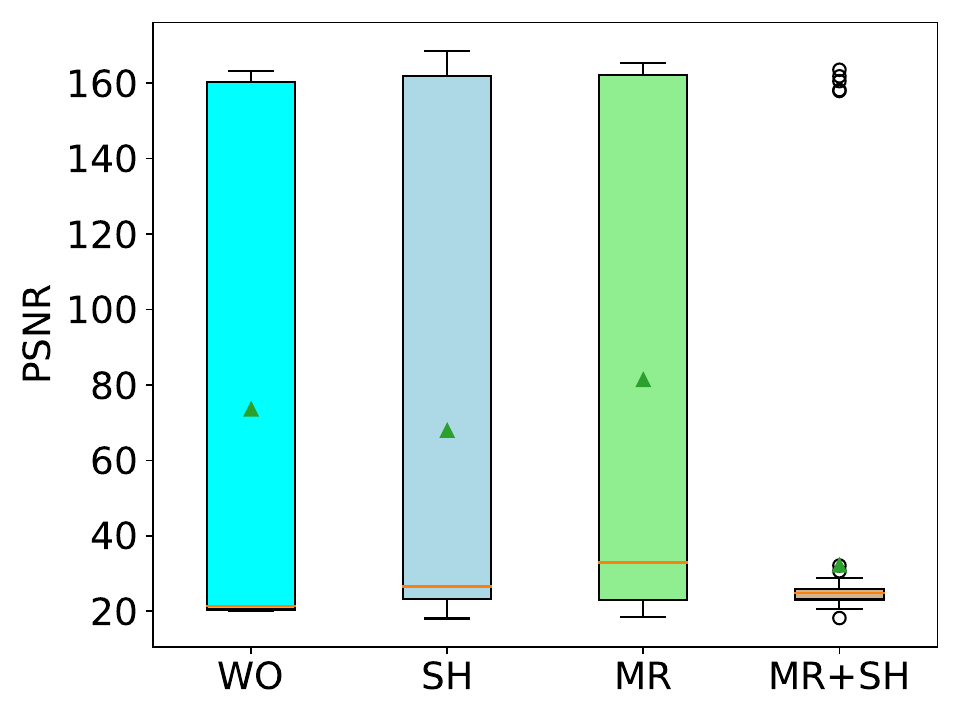}
        \includegraphics[width=0.49\textwidth]{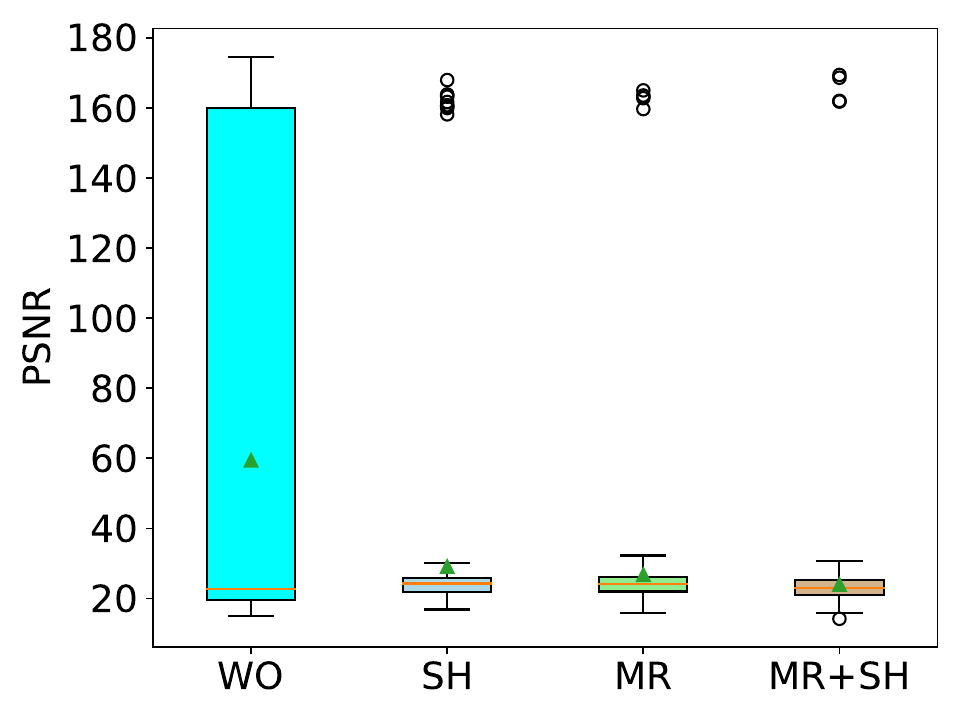}
        \caption{CIFAR100. \textit{Left:} $(B,n) = (8,300)$. \textit{Right:} $(B,n) = (64,600)$}
        \label{fig:cah-psnr-cifar}
    \end{subfigure}
    \caption{PSNR values of images reconstructed by the CAH attack w.r.t different transformations and different batch sizes on ImageNet and CIFAR100. The green triangle denotes the average PSNR over all reconstructed images. \emph{(WO = Without OASIS, SH = Shearing, MR = Major Rotation, and MR + SH = Major Rotation + Shearing)}}
    \label{fig:cah-psnr}
\end{figure*}


For a fair evaluation, the attacks are first configured to have the highest success rate. 
As discussed in the threat model in Section \ref{sec:threat}, the malicious layer is appended right after the input layer. 
Furthermore, the attack performance depends on the number of attacked neurons $n$, and the batch size $B$.  Generally, it is straightforward that the reconstruction attacks perform worse with larger batch sizes. We experiment with two batch sizes: $B=8$ for evaluating against strong attacks, and $B=64$ for a more realistic training configuration. 
We conduct preliminary experiments to find the hyperparameters that result in the strongest attacks. Specifically, we test the attacks with various batch sizes and numbers of attacked neurons, and report the average PSNR value over the images reconstructed by RTF and CAH in Figures \ref{fig:rtf-config} and \ref{fig:cah-config}, respectively. 
As previously stated, the reconstruction attacks perform worse with larger batch sizes, and that behavior is illustrated in Figures \ref{fig:rtf-config} and \ref{fig:cah-config}. 
For each batch size, we choose the number of attacked neurons $n$ that yields the highest average PSNR.

As can be seen in Figure \ref{fig:rtf-config}, the RTF attack's optimal settings for ImageNet with a batch of 8 occur with 900 attacked neurons yielding an average PSNR value of 127.9 dB. The optimal settings for a batch of 64 occur with 800 attacked neurons yielding an average PSNR value of 91.63 dB. For CIFAR100, we see the optimal settings for a batch of 8 and 64 are 500 and 600 attacked neurons yielding average PSNR values of 147.72 dB and 121.72 dB, respectively. 



We test for the optimal settings of the CAH attack in a similar manner in Figure \ref{fig:cah-config}. For ImageNet, a batch of 8 with 100 attacked neurons produces an average PSNR value of 147.93 dB and a batch of 64 with 700 attacked neurons produces an average PSNR value of 97.38 dB. CIFAR100 was treated the same as before. A batch of 8 along with 300 attacked neurons results in an average PSNR value of 70.54 dB while a batch of 64 with 600 attacked neurons yields an average PSNR value of 40.02 dB.

\noindent\textbf{OASIS Implementation.} As for constructing $\mathcal{D}'$ in Equation \ref{equ:oasis}, we test with various methods of image augmentation, including rotation, shearing, and flipping, and observe how each of them impacts the performance of OASIS. We describe how the transformations are implemented as follows. In the case of major rotation, every image in $\mathcal{D}$ was rotated three different times at angles of $90^\circ$, $180^\circ$, and $270^\circ$, following Equation \ref{equ:rotate}, to generate three transformed images for $\mathcal{D}'$. For minor rotation, we rotate each image three different times at angles of $30^\circ, 45^\circ, 60^\circ$.

For flipping, we conduct both horizontal and vertical flipping using Equations \ref{equ:hflip} and \ref{equ:vflip}, respectively. In regard to shearing, we follow Equation \ref{equ:shear} and shear every image in $\mathcal{D}$ with three different shear factors of 0.55, 1.0, and 0.9 to generate three transformed images for $\mathcal{D}'$. Each transformation is implemented with the official PyTorch Vision library\footnote{\textit{\url{https://github.com/pytorch/vision.git}}} and the Kornia library\footnote{\textit{\url{https://github.com/kornia/kornia.git}}}.
\subsection{\textbf{OASIS Defensive Performance}}
Figure \ref{fig:rtf-psnr} depicts the effectiveness of our defense in regard to reducing the reconstruction quality of the RTF attack. 
Five transformations are used in this experiment, and it can be seen from Figure \ref{fig:rtf-psnr} that each of them substantially reduce the PSNR values of the reconstructed images across all testing scenarios. Specifically, without OASIS, most of the images reconstructed by the RTF attack have PSNR ranging from 130 dB to 145 dB at batch size 8, indicating perfect reconstruction. Major rotation is the most robust transformation such that by adding rotations at major angles 
to each image in $\mathcal{D}$, the resulting reconstruction by RTF only yields PSNR from 15 dB to 20 dB. Thus, the content of each image in $\mathcal{D}$ remains hidden.

To understand how the major rotation can invalidate the RTF attack, we note that the activation of attacked neurons in RTF depends on a scalar quantity of the input, such as the average of pixel values \cite{fowl2021robbing}. Major rotation imposes minimal change to this quantity (it does not change the average of pixel values). Hence, using this transformation for building $X'_t$ ensures that $x_t$ and $X'_t$ activate the same set of neurons, for all $x_t\in \mathcal{D}$. Furthermore, as we shall see in Section \ref{ssec:visual}, 
a linear combination of an image and its rotations yields an unrecognizable image. 
We also note that flipping does not change the average of pixel values either, however, this transformation does not necessarily result in unrecognizable reconstruction (as shown later in Section \ref{ssec:visual}), thus its PSNR is slightly higher than that of major rotation.

Figure \ref{fig:cah-psnr} illustrates the performance of OASIS against the CAH attack. 
With batch size 64, we observe a similar result as the previous experiment against RTF in which the major rotation keeps the PSNR of reconstructed images low. However, for batch size 8, the major rotation fails to prevent many images from being perfectly reconstructed. The same behavior is exhibited through shearing. The core issue here is that these transformations alone are not enough to prevent several $x_t\in \mathcal{D}$ from being the sole activation of certain attacked neurons in CAH, thus the content of those $x_t$ is revealed through reconstruction.

To tackle this issue, we attempt to integrate multiple transformations to increase the likelihood that $x_t$ and some images in $X'_t$ activate the same set of neurons in the malicious layer. In other words, the set $X'_t$ is constructed by more than one transformation. As shown in Figure \ref{fig:cah-psnr}, we experiment with integrating the two most robust transformations: major rotation and shearing. This integration is able to render the reconstruction by CAH unrecognizable with low PSNR. Specifically, with ImageNet (Figure \ref{fig:cah-psnr-imgnet}), it significantly decreases the PSNR of reconstructed images from above 125 dB to below 25 dB. The same effect is also exhibited with CIFAR100 (Figure~\ref{fig:cah-psnr-cifar}). 


\subsection{\textbf{Visual Reconstructions}}\label{ssec:visual}
We visually demonstrate the resulting reconstruction from the attacks. The goal is to show that, with our OASIS defense, the attacks indeed reconstruct a linear combination of an image and its transformations, effectively confirming the claims in Section \ref{sec:oasis}. 
Moreover, it shows that the linear combination yields the reconstructed image unrecognizable, protecting the content of the input images. 

\noindent \textbf{\underline{\textit{Rotation.}}} Figures \ref{fig:og_major_dataset} and \ref{fig:og_minor_dataset} illustrate the reconstruction from the RTF attack with major rotation and minor rotation being used as augmentations from OASIS, respectively. We can see that the reconstructed images are 
an overlap of the original images and their respective rotations. As previously discussed in Figure \ref{fig:rtf-psnr}, major rotation is the most effective transformation with the lowest PSNR for reconstruction, and we can see in Figure \ref{fig:og_major_dataset} that the reconstructed images are unrecognizable. Although the reconstruction with minor rotation has higher PSNR, 
Figure \ref{fig:og_minor_dataset} shows that it is still challenging to discern the original images 
from the reconstructed ones.

\begin{figure}
    \centering
    \begin{tabular}{c|c}
        \includegraphics[width=40mm,trim={24cm 0 0 0},clip=True]{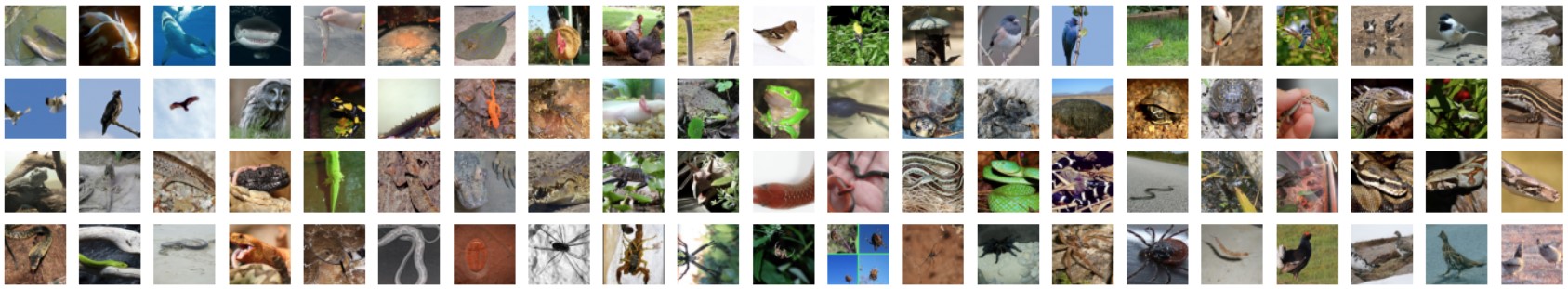}
        &
        \includegraphics[width=40mm,trim={24.20cm 0 0 0.2cm},clip=True]{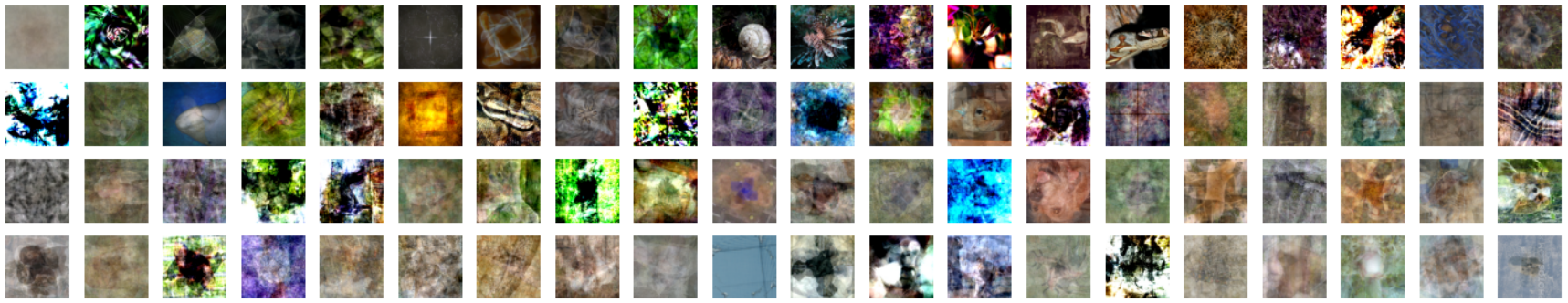}
    \end{tabular}
    \caption{\textit{Left:} Raw input  images. \textit{Right:} Reconstruction result with major rotation.}
    \label{fig:og_major_dataset}
\end{figure}

\begin{figure}
    \centering
    \begin{tabular}{c|c}
        \includegraphics[width=40mm,trim={24cm 0 0 0},clip=True]{Figures/original_dataset.jpg}
        &
        \includegraphics[width=40mm,trim={24.20cm 0 0 0.2cm},clip=True]{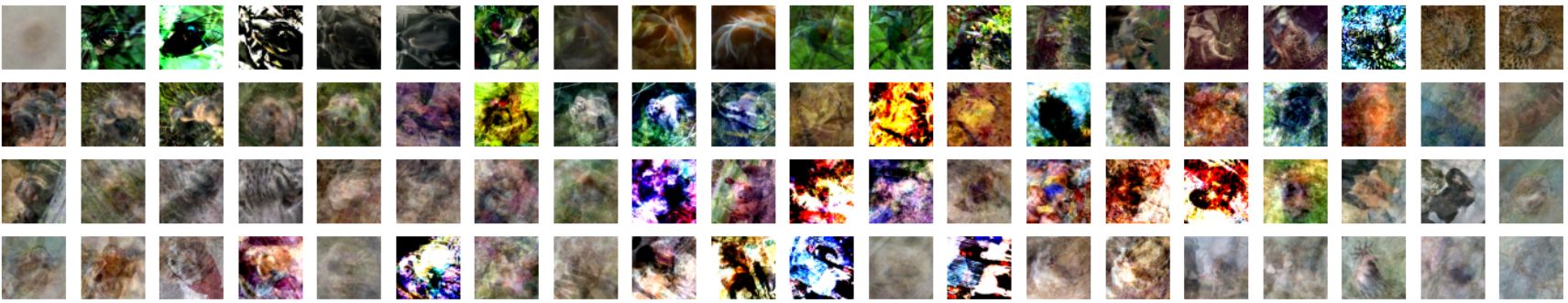}
    \end{tabular}
    \caption{\textit{Left:} Raw input  images. \textit{Right:} Reconstruction result with minor rotation.}
    \label{fig:og_minor_dataset}
\end{figure}

\begin{figure}
    \centering
    \begin{tabular}{c|c}
        \includegraphics[width=40mm,trim={24cm 0 0 0},clip=True]{Figures/original_dataset.jpg}
        &
        \includegraphics[width=40mm,trim={24.20cm 0 0 0.2cm},clip=True]{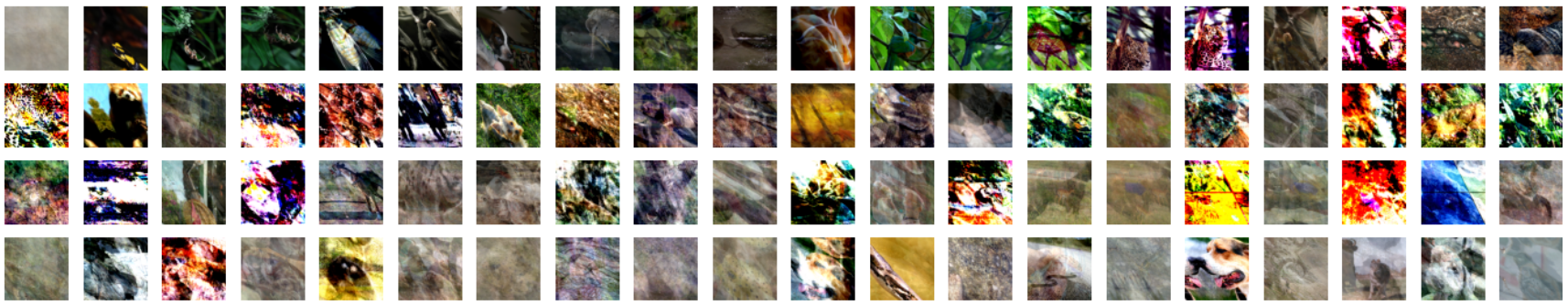}
    \end{tabular}
    \caption{\textit{Left:} Raw input images. \textit{Right:} Reconstruction result with shearing.}
    \label{fig:model_shear}
\end{figure}

\noindent \textbf{\underline{\textit{Shearing.}}} Figure \ref{fig:model_shear} presents the reconstruction from the RTF attack with shearing being used as augmentation for OASIS. We can see that the original image and its sheared version overlap one another in the reconstruction, thereby hindering the attacker from making out the original. This also explains the low PSNR of shearing in Figure \ref{fig:rtf-psnr}.

\noindent \textbf{\underline{\textit{Flipping.}}} Figures \ref{fig:flip_hori} and \ref{fig:flip_vert} illustrate the reconstruction from the RTF attack with horizontal flipping and vertical flipping being used as augmentation for OASIS, respectively. We can see that they did not defend as well against the attack compared to rotation and shearing. A linear combination of an original image and its horizontally or vertically flipped version only generates a reflection of the original, thus the original image is still revealed in the reconstruction. 
Figures \ref{fig:flip_hori} and \ref{fig:flip_vert} show that some images are reflected in the reconstruction. This means that flipping, when used alone, is not the best suited transformation to defend against this class of attacks. However, using flipping in combination with a strong transformation such as rotation or shearing may yield better results.

\begin{figure}
    \centering
    \begin{tabular}{c|c}
        \includegraphics[width=40mm,trim={24cm 0 0 0},clip=True]{Figures/original_dataset.jpg}
        &
        \includegraphics[width=40mm,trim={24.20cm 0 0 0.2cm},clip=True]{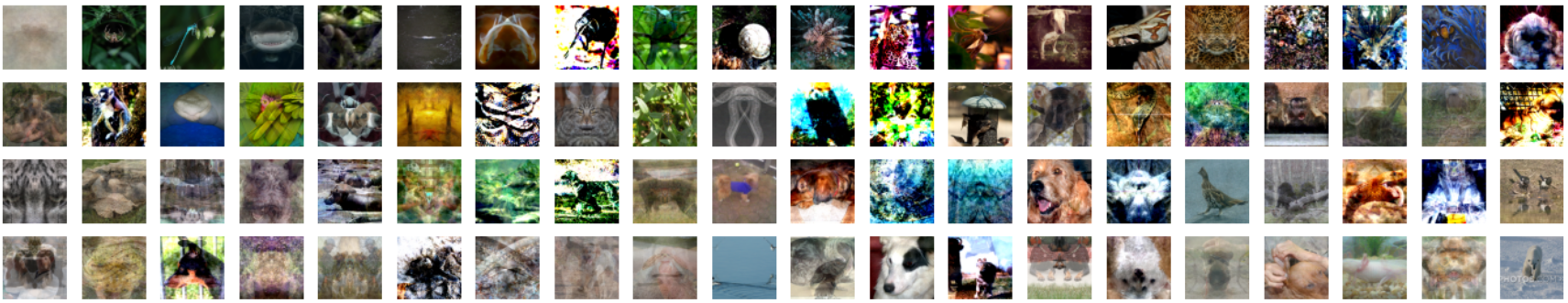}
    \end{tabular}
    \caption{\textit{Left:} Raw input images. \textit{Right:} Reconstruction result with horizontal flipping.}
    \label{fig:flip_hori}
\end{figure}

\noindent \textbf{\underline{\textit{Integrating Major Rotation and Shearing.}}} As previously discussed in Figure \ref{fig:cah-psnr}, an integration of multiple transformations is needed to counter the CAH attack.  Figure \ref{fig:combo} illustrates the reconstruction from CAH when both major rotation and shearing are used in OASIS. It can be seen that all the reconstructed images are unrecognizable and it is impossible to identify any original image from them. This behavior is consistent with the results in Figure \ref{fig:cah-psnr}.

\begin{figure}
    \centering
    \begin{tabular}{c|c}
        \includegraphics[width=40mm,trim={24cm 0 0 0},clip=True]{Figures/original_dataset.jpg}
        &
        \includegraphics[width=40mm,trim={24.20cm 0 0 0.2cm},clip=True]{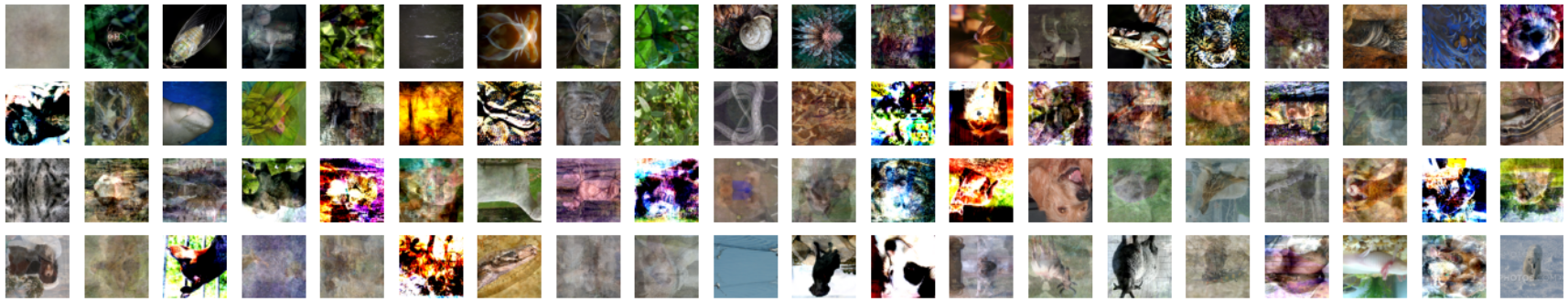}
    \end{tabular}
    \caption{\textit{Left:} Raw input images. \textit{Right:} Reconstruction result with vertical flipping.}
    \label{fig:flip_vert}
\end{figure}

\begin{figure}
    \centering
    \begin{tabular}{c|c}
        \includegraphics[width=37.2mm, angle=90]{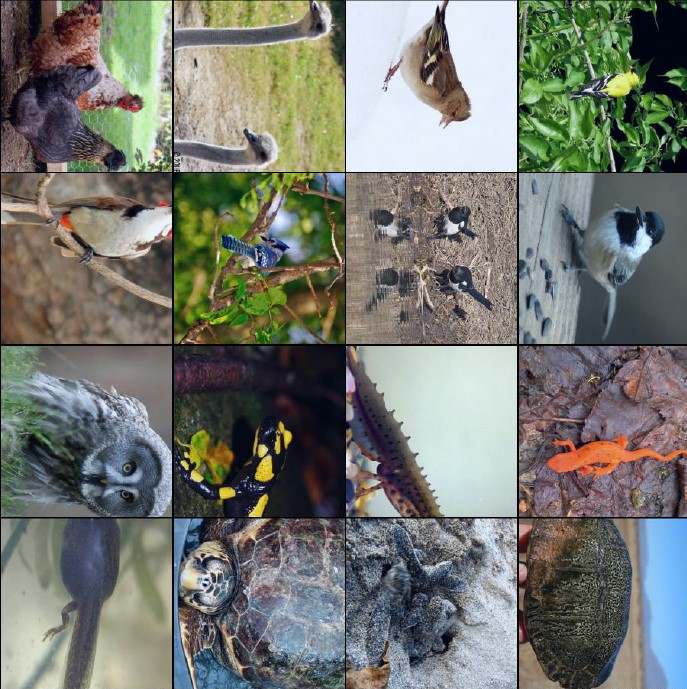}
        &
        \includegraphics[width=37.2mm]{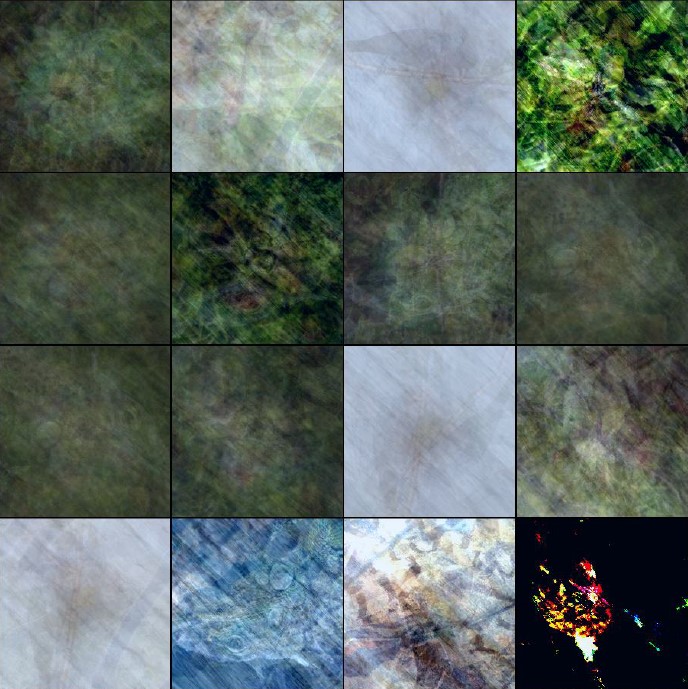}
    \end{tabular}
    \caption{\textit{Left:} Raw input images. \textit{Right:} Reconstruction result with an integration of major rotation and shearing.}
    \label{fig:combo}
\end{figure}



In summary, major rotation and an integration of major rotation and shearing result in the strongest defense against the RTF and CAH attacks, respectively. Additionally, OASIS has been shown to be scalable as it maintains low PSNR on reconstructed images for both small and large batch sizes. We further note that it is not trivial to extract the original image from such an overlap of multiple transformed images without any prior knowledge about certain characteristics of the original image. Although the server might know about certain augmentations being used as a defense, it 
does not know the specific parameters of the transformations (e.g., shearing intensity). Previous research has shown that, even with a mild blurry image, it is 
very challenging to practically reconstruct the original image without knowing the blurring kernel and padding \cite{ren2020neural}, while our defense uses far more complicated and multiple transformations.
\subsection{\textbf{Gradient Inversion Attack on Linear Models.}}
In addition to the RTF \cite{fowl2021robbing} and CAH \cite{boenisch2023curious} attacks, we evaluate our OASIS defense against a reconstruction attack on linear models that was discussed in \cite{fowl2021robbing,geiping2020inverting}. The attack assumes a very restrictive setting where the model is a single-layer and is trained with a logistic regression loss function. Furthermore, the images in each training batch $\mathcal{D}$ are assumed to have unique labels. As users upload their local model updates, the server simply inverts the gradient of each neuron to reconstruct the training images.

Figure \ref{fig:ig-psnr} illustrates the effectiveness of our OASIS defense in reducing the reconstruction quality of this attack. 
Since this is a single-layer model, adding transformed images to the training batch guarantees that $x_t$ and $X'_t$ activate the same neuron, for all $x_t\in\mathcal{D}$. Hence, each reconstructed image will be a linear combination of $x_t$ and $X'_t$. Moreover, such a linear combination hides the content of the original image (as discussed in Section \ref{sec:exp}). Therefore, Figure \ref{fig:ig-psnr} shows that all five transformations yield reconstruction with low PSNR for both datasets and both batch sizes. We can also see that rotation and shearing have better defensive performance than flipping, 
corroborating our findings in Section \ref{sec:exp}.

\begin{figure*}[htp!]
    \centering
    \begin{subfigure}{0.49\linewidth}
        \centering
        \includegraphics[width=0.49\textwidth]{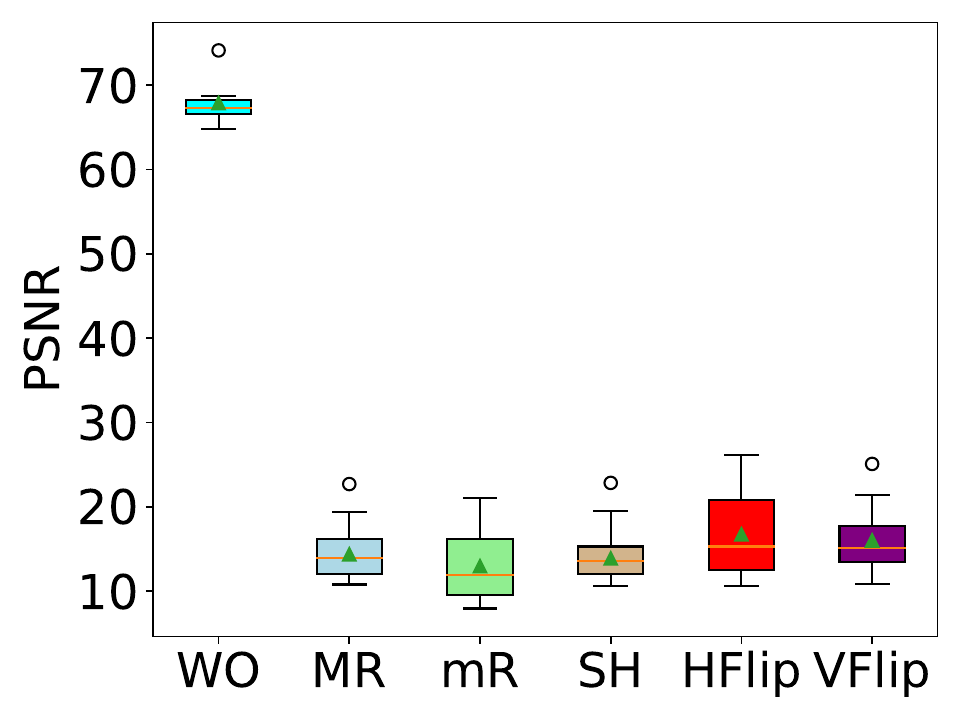}
        \includegraphics[width=0.49\textwidth]{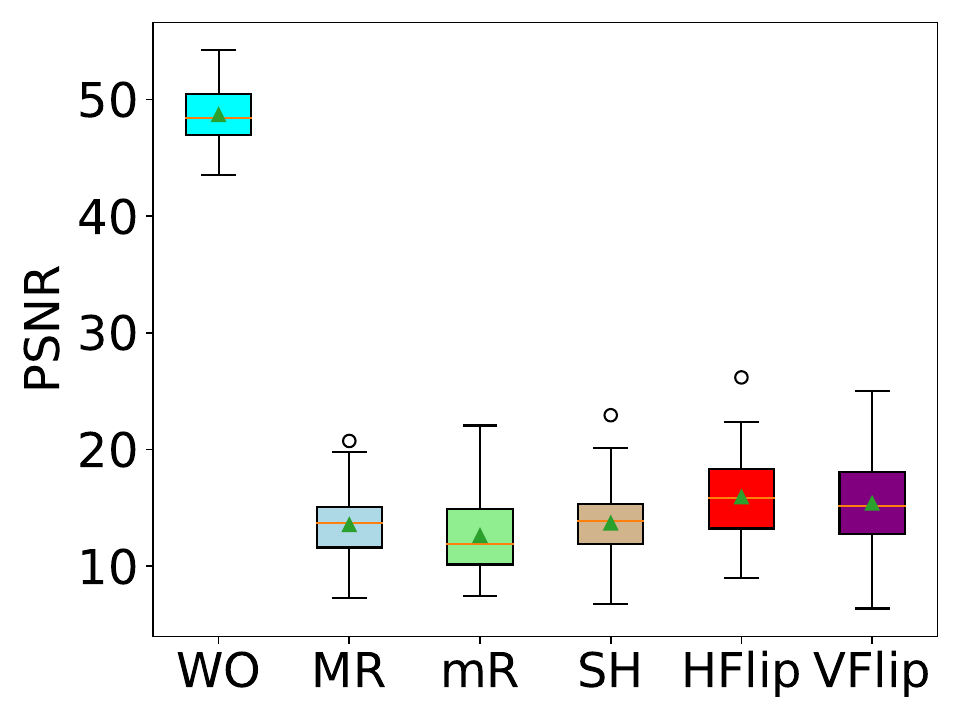}
        \caption{ImageNet. \textbf{Left:} $B = 8$. \textbf{Right:} $B = 64$}
    \end{subfigure}
    \hfill
    \begin{subfigure}{0.49\linewidth}
        \centering
        \includegraphics[width=0.49\textwidth]{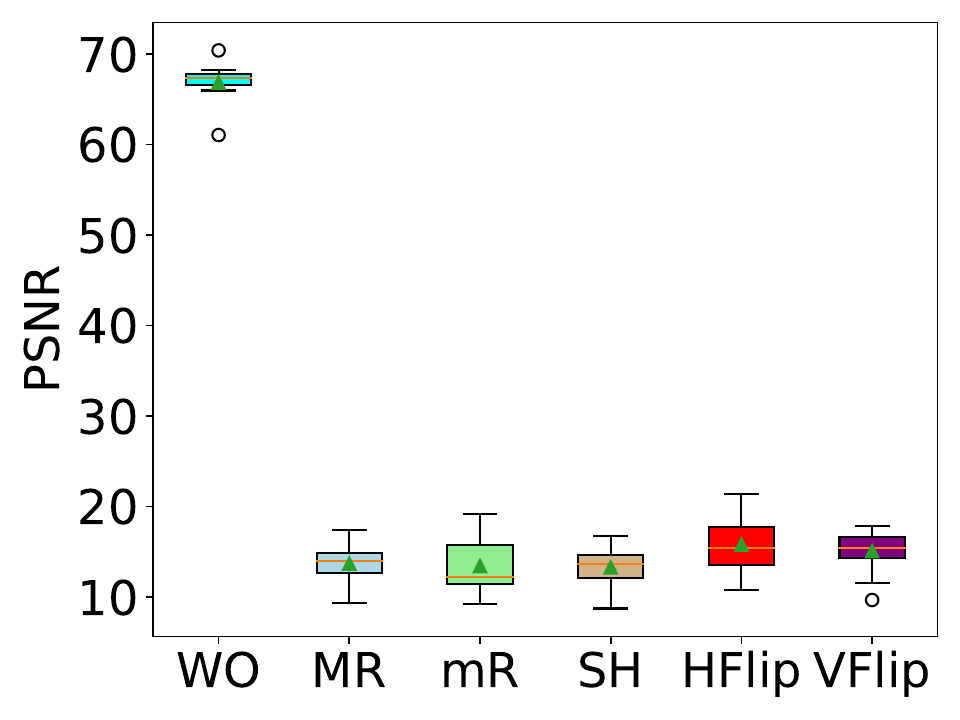}
        \includegraphics[width=0.49\textwidth]{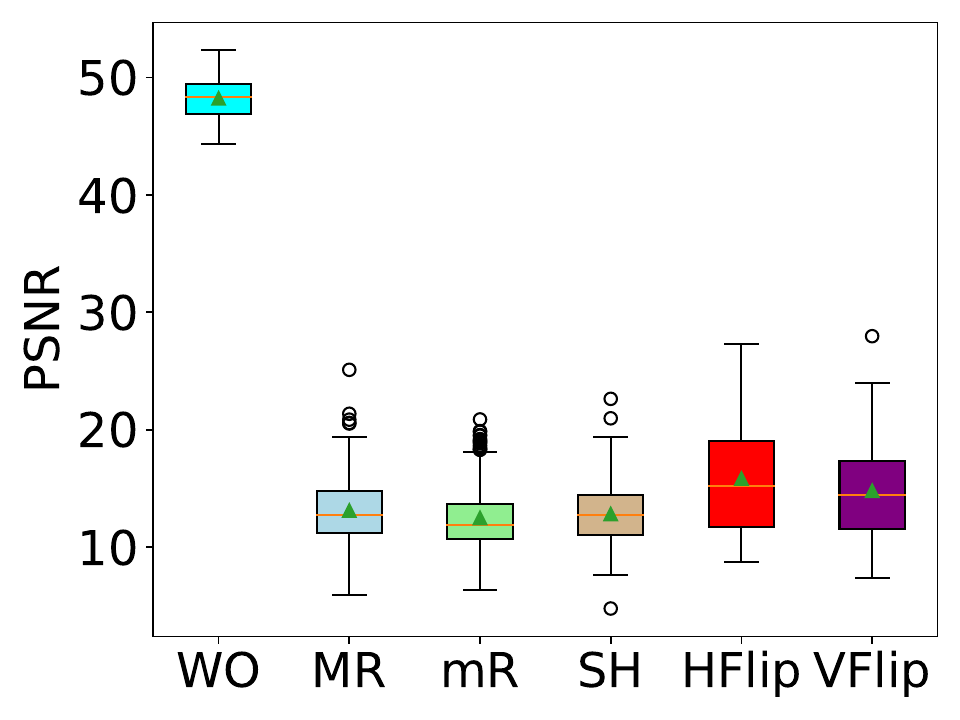}
        \caption{CIFAR100. \textbf{Left:} $B = 8$. \textbf{Right:} $B = 64$}
    \end{subfigure}
    \caption{PSNR values of images reconstructed by the gradient inversion attack on linear models w.r.t different transformations and different batch sizes on ImageNet and CIFAR100. The green triangle denotes the average PSNR over all reconstructed images. \emph{(WO = Without OASIS, MR = Major Rotation, mR = Minor Rotation, SH = Shearing, HFlip = Horizontal Flip, and VFlip = Vertical Flip)} 
    }
    \label{fig:ig-psnr}
\end{figure*}

\subsection{\textbf{Impact of OASIS on Model Performance}}
We 
measure the effect of using OASIS on the training models as it alters the input dataset for training. For this experiment, we train ResNet-18 models \cite{he2016deep} on 
ImageNet and CIFAR100, then compare the final testing accuracies when training with and without OASIS. In particular, when training with OASIS, we replace each training batch $\mathcal{D}$ with $\mathcal{D}'$ as mentioned in Section \ref{sec:oasis}. 
The result for each transformation is shown in Table \ref{table:mp-table}.


For ImageNet \cite{deng2009imagenet}, we extract a subset of 10 classes: tench, English springer, cassette player, chain saw, church, French horn, garbage truck, gas pump, golf ball, and parachute\footnote{\url{https://github.com/fastai/imagenette}}. Then, we evaluate the model performance on classifying those 10 classes. 
Using our ResNet-18 architecture, 
we train for 100 epochs with an Adam optimizer at a learning rate of 0.001 and weight decay of $10^{-5}$.

With regard to CIFAR100 \cite{krizhevsky2009learning}, we use its original classification task with 100 classes. 
Again, using our ResNet-18 architecture, 
we train for 120 epochs with an Adam optimizer at a learning rate of 0.001 and weight decay of $10^{-2}$.



\begin{table}[htp!]
    \centering
    \caption{Comparing model accuracy (\%) when training with and without OASIS}
    \label{table:mp-table}
  \begin{tabular}{lSSSSSS}
    \toprule
    \multirow{2}{*}{Transformation} &
      \multicolumn{2}{c}{Dataset} \\
      & {ImageNet} & {CIFAR100} \\
      \midrule
    Major Rotation & 92.6 & 74.3\\
    Minor Rotation & 92.6 & 74.1\\
    Shearing & 95.4 & 73.7\\
    Horizontal Flip & 94.0 & 75.1\\
    Vertical Flip & 94.8 & 74.3\\
    Major Rotation + Shearing & 90.9 & 74.6\\
    \midrule
    Without OASIS & 94.8 & 75.2\\
    \bottomrule
  \end{tabular}
\end{table}
 
Across all the transformations, OASIS does not impose any major degradation on the model accuracy. The accuracy is still maintained over 90\% on ImageNet, and drops \textit{at most} 1.5\% on CIFAR100. The reason for this is that image augmentation methods are originally developed for improving the generalization and reducing overfitness of ML models. From this, the claims made in Section \ref{sec:settings} are confirmed. 
\section{Related Work}\label{sec:related}
\noindent \textbf{Data Reconstruction Attacks.} Reconstruction attacks have been one of the main topics of interest in ML security and privacy. Over the decade, various kinds of reconstruction attacks have been proposed, including class-wise representation-based attacks \cite{wang2019beyond,hitaj2017deep,sun2021soteria} and optimization-based attacks \cite{zhu2019deep,zhao2020idlg,yin2021see}. However, in the context of FL, most of these attacks are not able to exploit the full capability of dishonest servers. Recent work 
\cite{fowl2021robbing,boenisch2023curious} devises a new class of \textit{active} reconstruction attacks that has been shown to significantly outperform prior attacks by having the dishonest server manipulate the global model parameters to its advantage. For that reason, this new class of attack remains a critical and practical threat for FL. 
Our work focuses on devising a general defense that effectively protects user data against these attacks. From analyzing the underlying principle of gradient inversion, our defense OASIS is designed to minimize 
reconstruction quality.

\noindent \textbf{Current Defenses.} Presently, there is no existing defense that can defend against \textit{active} reconstruction attacks via \textit{dishonest servers}. 
In general, previous defenses 
utilize a threat model with an honest-but-curious server that is substantially weaker than our threat model which includes an actively dishonest server. Several defense mechanisms have been proposed to tackle data reconstruction attacks in general, but they remain ineffective in countering the versions presented in this paper. 
Through gradient compression and sparsification methods, the work in \cite{zhu2019deep,sun2021soteria} \emph{pruned} gradients with negligible magnitudes to zero. Nonetheless, even in a case where the 
majority of the gradients are pruned, data extracted is 
still recognizable \cite{boenisch2023curious}. 

\begin{figure}
    \centering
    \begin{tabular}{c|c}
        \includegraphics[width=40mm]{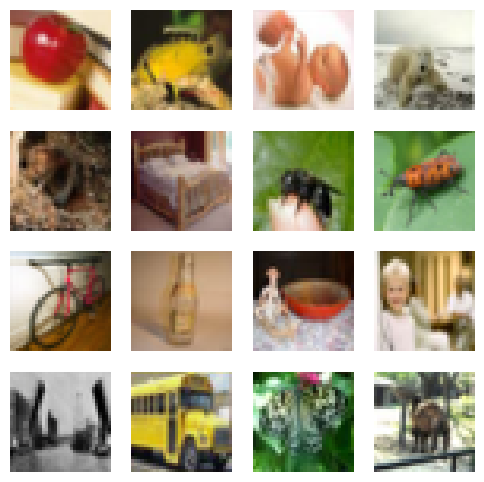}
        &
        \includegraphics[width=40mm]{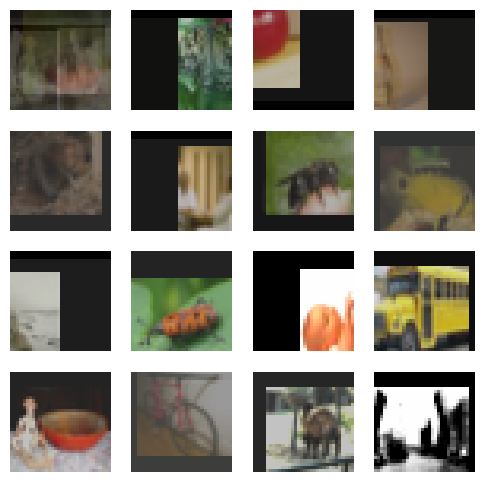}
    \end{tabular}
    \caption{Reconstruction result of RTF against the defense in \cite{gao2021privacy}. The content of the original images is revealed in the reconstruction. \textit{Left:} Raw input images. \textit{Right:} Reconstruction.}
    \label{fig:ats}
\end{figure}

Gao et al. \cite{gao2021privacy} leverage image augmentation in their proposed defense, but it
can only tackle optimization-based attacks. In particular, the defense replaces each image in the dataset with a transformed image so that the objective function of the attacks becomes more difficult to solve. However, it fails to counter the \textit{active} reconstruction attacks since their principle (Section \ref{sec:threat}) still applies: if an attacked neuron is activated only by one transformed image, the image would be reconstructed. 
To support this claim, we conduct an experiment in which we launch the RTF attack \cite{fowl2021robbing} against this defense and illustrate the resulting reconstruction in Figure \ref{fig:ats} (we adopt the implementation of \cite{gao2021privacy} from \url{https://github.com/gaow0007/ATSPrivacy}). 
As can be seen, the reconstruction reveals the content of the original input images. Therefore, defenses against optimization-based reconstruction attacks are not robust against these \textit{active} reconstruction attacks if they do not address the attack principle of gradient inversion. 

In \cite{fowl2021robbing,boenisch2023curious}, the authors evaluate the use of DP as a defense, and show that it imposes a major degradation on the model accuracy and the reconstructed images are still recognizable. Our OASIS defense is proven to effectively counter this new class of attacks as it tackles the core attack principle. Moreover, OASIS imposes minimal impact on model performance.
\section{Conclusion}\label{sec:con}
In this paper, we have revealed the key principle behind active reconstruction attacks in Federated Learning (FL) and have theoretically shown how to tackle this class of attacks. With machine learning foundations in data preprocessing, we have proposed OASIS, a novel method to augment images in a way such that an \textit{actively} dishonest server is unable to memorize individual gradient parameters, but a linear combination of an image and its augmented counterparts. In doing so, we \emph{offset} the active reconstruction attacks, rendering reconstructions unrecognizable. To address FL's promise of maintaining model performance, we also demonstrate that the expansion of a labeled dataset through augmentation preserves and, in some cases, improves model performance. From our evaluation, OASIS stands as a general, viable, and scalable solution to truly promote and reinforce the guarantees of FL. Although the use of image augmentation makes OASIS confined to the image domain, we note that the attack principle that we uncover in Section \ref{sec:threat} is not limited to any data types. 
Future work will focus on finding alternative methods besides image augmentation to implement an effective defense for tabular and textual data.

\section*{Acknowledgement}

This work was supported in part by the National Science Foundation under grants CNS-2140477, CNS-2140411, and ITE-2235678, and by the Korea Institute of Energy Technology Evaluation and Planning (KETEP) grant funded by the Korea government (MOTIE) (RS-2023-00303559, Study on developing cyber-physical attack response system and security management system to maximize real-time distributed resource availability).

This work was authored in part by the National Renewable Energy Laboratory, operated by Alliance for Sustainable Energy, LLC, for the U.S. Department of Energy (DOE) under Contract No. DE-AC36-08GO28308. The views expressed in the article do not necessarily represent the views of the DOE or the U.S. Government. The U.S. Government retains and the publisher, by accepting the article for publication, acknowledges that the U.S. Government retains a nonexclusive, paid-up, irrevocable, worldwide license to publish or reproduce the published form of this work, or allow others to do so, for U.S. Government purposes.

\bibliographystyle{ieeetr}
\bibliography{ref}

\end{document}